\newcommand{\IR}{\mathbb{R}}
\newcommand{\eps}{\varepsilon}
\newcommand{\IntRange}[1]{\left\llbracket #1 \right\rrbracket}
\def\vgap{\vspace{0.1 in}}
\newcommand{\cP}{\mathcal{P}}
\newcommand{\cC}{\mathcal{C}}
\newcommand{\cT}{\mathcal{T}}
\newcommand{\oG}{\overline{G}}
\newcommand{\cG}{\mathcal{G}}
\newcommand{\Patrascu}{P\v{a}tra\c{s}cu}
\newtheorem{theorem}{Theorem}
\newtheorem{lemma}{Lemma}
\newenvironment{proof}{\trivlist\item[]\emph{Proof}:}%
{\unskip\nobreak\hskip 1em plus 1fil\nobreak$\Box$
\parfillskip=0pt%
\endtrivlist}
\begin{document}

\title{Orthogonal Point Location and  Rectangle Stabbing Queries in 3-d}
\author{Timothy M. Chan\thanks{Dept.\ of Computer Science, University of Illinois at Urbana-Champaign, USA. Email 
{\tt tmc@illinois.edu}.}
\and Yakov Nekrich\thanks{Cheriton School of Computer Science, University of Waterloo, Canada. Email {\tt yakov.nekrich@googlemail.com}.}
\and Saladi Rahul\thanks{Dept.\ of Computer Science, University of Illinois at Urbana-Champaign, USA. Email
{\tt saladi@uiuc.edu}.}
\and Konstantinos Tsakalidis\thanks{Dept.\ of Computer and Information Science, 
Tandon School of Engineering, New York University, USA. Email {\tt kt79@nyu.edu}
Partially supported by NSF grants CCF-1319648 and CCF-1533564.}
}
\date{}
\maketitle




\begin{abstract}
 In this work, we present a collection of new results on
 two fundamental problems in geometric data structures: 
{\em orthogonal point location} and {\em rectangle stabbing}.

\begin{itemize}
\item {\bf Orthogonal point location.}
We give the first linear-space data structure that supports 3-d point location queries on $n$ disjoint axis-aligned boxes 
with \emph{optimal\/} $O\left( \log n\right)$ query time
in the (arithmetic) pointer machine model.  This improves the previous $O\left( \log^{3/2} n \right)$ bound of Rahul [SODA 2015].  
We similarly obtain the first linear-space data structure in the I/O
model with optimal query cost, and also the first linear-space data structure in the 
word RAM model with sub-logarithmic query time.

\item {\bf Rectangle stabbing.}
We give the first linear-space data structure that supports 3-d
$4$-sided and $5$-sided rectangle stabbing queries 
in \emph{optimal\/} $O(\log_wn+k)$ time in the word RAM model.
We similarly obtain the first optimal data structure for the closely related problem of 2-d top-$k$ rectangle stabbing in the word RAM model, and also improved results
for 3-d 6-sided rectangle stabbing.
\end{itemize}

For point location, our solution is simpler than previous methods, and is based on an interesting variant of the van Emde Boas recursion, applied in a round-robin fashion over the dimensions, combined with bit-packing techniques.
For rectangle stabbing, our solution is a variant
of Alstrup, Brodal, and Rauhe's grid-based recursive technique (FOCS 2000), combined with
a number of new ideas.
 \end{abstract}

\thispagestyle{empty}
\newpage

\section{Introduction}

In this work we present a plethora of new results on two fundamental problems in geometric data structures: (a) {\em orthogonal point location} (where the input rectangle or boxes are 
non-overlapping), and (b) {\em rectangle stabbing} (where the input rectangles or boxes are 
overlapping).

\subsection{Orthogonal point location}
{\em Point location} is among the most central problems in the field of computational geometry, which is covered in textbooks and has countless applications. In this paper we study the {\em orthogonal point location} problem. 
Formally, we want to preprocess a set of $n$ {\em disjoint} axis-aligned  boxes (hyperrectangles) in $\IR^d$ into a data structure, so that the box in the set containing a given query point (if any) can be reported efficiently. There are two natural versions of this problem, for (a) {\em arbitrary disjoint boxes} where the input boxes need not fill the entire space, and 
(b) a {\em subdivision} where the input boxes fill the entire space. 

\subparagraph*{Arbitrary disjoint boxes.} Historically, the point location problem has been studied in the pointer machine model and the main question  has been the following: 
\begin{center}
{\em ``Is there a linear-space structure with $O(\log n)$ query time?''}
\end{center}
In 2-d this question has been successfully resolved: 
there exists a linear-space structure with $O(\log n)$ query time~\cite{lt80,k83,egs86,st86,snoeyink} 
(actually this result holds for nonorthogonal point location). In 3-d there has been work on this problem~\cite{ehh86,il00,aal10,r15}, 
but the question has not yet been resolved. 
The currently best known result on the pointer machine model is a linear-space structure with  $O(\log^{3/2} n)$ query time by Rahul~\cite{r15}. In this paper, 

\begin{itemize}
\item we obtain the  first linear-space structure with $O(\log n)$ query time 
for 3-d orthogonal point location for arbitrary disjoint boxes. The structure works in the (arithmetic) pointer machine model and is {\em optimal} in this model. 
\end{itemize}

The orthogonal point location problem has  been studied in the I/O-model and the word RAM as well 
(please see Section A in the appendix for a brief description of these models).
In the I/O model, an optimal solution is known in 2-d~\cite{gtvv93,adt03}:
a linear-space structure with $O(\log_Bn)$ query time, 
where $B$ is the block size
(this result holds for nonorthogonal point location). 
However, in 3-d  the best known result is a linear-space structure with a query cost of $O(\log_B^2n)$ I/Os by Nekrich~\cite{n08} (for orthogonal
point location for disjoint boxes). 

\begin{itemize}
\item
In the I/O model,
we obtain the first  linear-space  structure with $O(\log_Bn)$ query cost 
for 3-d orthogonal point location for arbitrary disjoint boxes. 
This result is {\em optimal}. 
\end{itemize}

In the word RAM model, an optimal solution in 2-d was given by
Chan~\cite{c13} with a query time of $O\left(\log \log U\right)$, assuming that  input coordinates are in 
 $\left[ U \right]=\{0,1,\ldots,U-1\}$. However, in 3-d the best known result for arbitrary disjoint boxes is a linear-space structure with  
$O\left( \log n\log\log n \right)$ query time: this result was not stated explicitly before but can obtained by 
an interval tree augmented with Chan's 2-d orthogonal point location structure~\cite{c13} at each node.  Our above new result with logarithmic query time
is already an improvement even in the word RAM, but we can do slightly better still:

\begin{itemize}
\item In the $w$-bit word RAM model, we obtain the first linear-space structure with  
\emph{sub-logarithmic\/} query time for 3-d orthogonal point location for arbitrary disjoint boxes. 
The time bound is $O(\log_wn)$.
(We do not know whether this 
result is optimal, however.) 
\end{itemize}

\subparagraph*{Subdivisions.}
In the plane, the two versions of the problem are equivalent in the sense that any arbitrary set of $n$ disjoint rectangles can be converted into a subdivision of $\Theta\left( n\right)$ rectangles via the vertical decomposition. 
In 3-d, the two versions are no longer equivalent, 

\begin{wrapfigure}{r}{0.2\textwidth}
\centering
\includegraphics[scale=0.55]{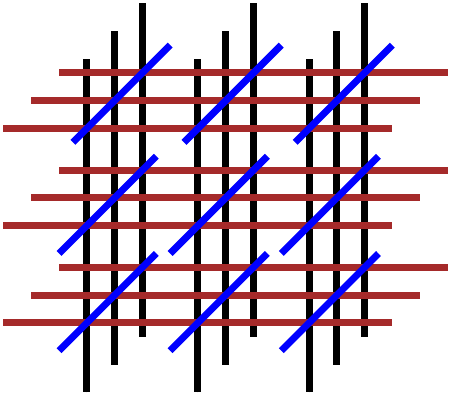}
\end{wrapfigure}
\noindent
since there exist sets of $n$ disjoint boxes that need $\Omega\left( n^{3/2}\right)$ boxes to fill the entire space. See 
figure on the right.

In 3-d  the special case of a subdivision is potentially easier than the arbitrary disjoint boxes setting, 
as the former allows for a fast $O(\log^2\log U)$ query time in the word RAM model with $O(n\log\log U)$ space,
as shown by de Berg, van Kreveld, and Snoeyink~\cite{bks95} (with an improvement by Chan~\cite{c13}).

\begin{itemize}
\item In the word RAM model, we further improve de Berg, van Kreveld, and Snoeyink's method
to achieve a {\em linear}-space structure with 
$O(\log^2\log U)$ query time for 3-d orthogonal point location on subdivisions.
\end{itemize}

\subsection{Rectangle stabbing}
Rectangle stabbing is a classical problem in geometric data structures~\cite{p08b,aal12,b77b,c86, r15}, which is as old, and as equally natural, as orthogonal range searching---in fact, it can be viewed as an ``inverse'' of 
orthogonal range searching, where the input objects are
boxes and query objects are points, instead of vice versa.
Formally, we want to preprocess a set $S$ 
 of $n$ axis-aligned boxes (possibly overlapping) in $\IR^d$ into a data structure, 
so that the boxes in $S$ containing a given query point $q$ can be reported efficiently. 
(As one of many possible applications, imagine a dating website, where each lady
is interested in gentlemen whose salary is 
in a range $[S_1{,}S_2]$ and age is in a range $[A_1{,}A_2]$; suppose that a gentleman with salary $x_q$ and age $y_q$ wants to identify all ladies who might be potentially interested in him.)
 
Throughout this paper, we will assume that the endpoints of the 
rectangles lie on the grid $[2n]^3$ (this can be achieved via a simple 
rank-space reduction).
In the word RAM model, \Patrascu~\cite{p11} gave a lower bound of $\Omega(\log_wn)$ query time for any data 
structure which occupies at most $n\log^{O(1)}n$ space to answer the 2-d rectangle stabbing query. 
Shi and Jaja~\cite{sj05b} presented an optimal solution in 2-d which occupies 
 linear space  with $O(\log_wn + k)$ query time,
where $k$ is the number of rectangles reported. 

\begin{wrapfigure}{r}{0.5\textwidth}
\vspace{-2ex}
 \centering
\includegraphics[scale=0.7]{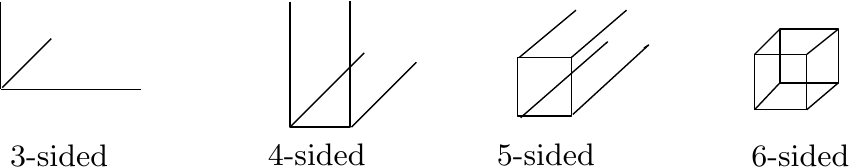}
\label{fig:rectangles}
\end{wrapfigure}
We  introduce some notation to define various types of rectangles in 3-d.
 (We will use the terms ``rectangle'' and ``box'' interchangably throughout the paper.)
A rectangle in 3-d is called \emph{$(3+t)$-sided\/} if it is bounded in 
$t$ out of the $3$ dimensions and unbounded (on one side) in the remaining $3-t$ dimensions.


In the word RAM model, an optimal solution in 3-d is known {\em only} for the
$3$-sided rectangle stabbing query: a linear-space structure with 
$O(\log\log_wn+k)$ query time (by combining the work of Afshani~\cite{p08b} and 
Chan~\cite{c13}; this is optimal due to the lower bound of \Patrascu~and~Thorup~\cite{pt06}). 
Finding an optimal solution 
for $4$-, $5$-, and $6$-sided rectangle stabbing has remained open.

\subparagraph*{3-d $4$- and $5$-sided rectangle stabbing.}
Currently, the best-known result  for $4$-sided and $5$-sided rectangle stabbing queries
by Rahul~\cite{r15} occupies $O(n\log^{*}n)$ 
space with $O(\log n +k)$ and $O(\log n\log\log n +k)$ query time, respectively.
This result holds in the pointer machine model. 
For $4$-sided rectangle stabbing, adapting Rahul's solution to the word RAM model 
does not lead to any improvement in  the query time (the bottleneck is in 
answering $\log n$ 3-d dominance reporting queries). 
 For $5$-sided rectangle stabbing, even if we assume the existence of an 
 optimal $4$-sided rectangle stabbing 
structure, plugging it into  Rahul's solution can 
improve the query time to only $O(\log n + k)$,
which is still suboptimal.
In this paper,

\begin{itemize}
\item we obtain the first {\em optimal} solution for 3-d $4$-sided and $5$-sided rectangle stabbing 
in the word RAM model: a linear-space structure with $O(\log_wn+k)$ query time.
\end{itemize}

\subparagraph{2-d top-$k$ rectangle stabbing.}
Recently, there has been a lot of interest in 
top-$k$ range searching
~\cite{abz11,b16,bfgl09,rj14,rt15,rt16,st12,t14}.
Specifically, in the 2-d top-$k$ rectangle stabbing problem, we want to preprocess a 
set  of {\em weighted} axis-aligned rectangles in 2-d, so 
that given a query point $q$ and an integer $k$, the goal is to report the $k$ largest-weight
rectangles containing (or stabbed by) $q$.
This problem is closely related to the $5$-sided rectangle stabbing problem
(by treating the weight as a third dimension, a rectangle $r$ with weight $w(r)$ 
can be mapped to a $5$-sided rectangle $r\times (-\infty,w(r)]$).

\begin{itemize}
 \item By extending the solution for 3-d $5$-sided rectangle stabbing problem, we obtain the first {\em optimal} solution 
for the 2-d top-$k$ rectangle stabbing problem: a linear-space structure with $O(\log_wn+k)$ query time. 
\end{itemize}

\subparagraph*{3-d $6$-sided rectangle stabbing.}
Our new solution to 3-d $5$-sided rectangle stabbing, combined with
standard interval trees,
immediately implies a solution to 3-d $6$-sided rectangle stabbing with a
query time of $O(\log_wn\cdot\log n +k)$, which is already new. 
But we can do slightly better still:

\begin{itemize}
\item We obtain a linear-space structure with $O(\log_w^2n+k)$ query time 
for 3-d $6$-sided rectangle stabbing problem in the word RAM model.
We conjecture this to be optimal (the analogy is the lower bound of $\Omega(\log^2n+k)$ 
query time for linear-space pointer machine structures~\cite{aal12}).
\end{itemize}

\subparagraph*{Back to orthogonal point location.}  Our solution for orthogonal point location uses rectangle stabbing as a subroutine: if there is  an $S(n)$-space data structure with $Q(n)+O(k)$ query time 
to answer the rectangle stabbing problem in $\Re^d$, then one can  obtain a data structure for 
orthogonal point location in $\Re^{d+1}$ with $O(S(n))$-space and $O(Q(n))$ time.
By plugging in our new results for 3-d $6$-sided rectangle stabbing, 
we obtain a linear-space word RAM 
structure which can answer any orthogonal point location query in 4-d in $O(\log_w^2n)$  time, improving the previously known $O(\log^{2} n\log\log n)$ bound~\cite{c13}.

\subsection{Our techniques}\label{intro:tech}

Our results are obtained using a number of new ideas (in
addition to existing data structuring techniques), which we feel
are as interesting as the results themselves.

\subparagraph*{3-d orthogonal point location.} 
To better appreciate our new 3-d orthogonal point location method,
we first recall that the current best word-RAM method had
$O(\log n\log\log n)$ query time, and was obtained by
building an interval tree over the $x$-coordinates, and
at each node of the tree, storing Chan's 2-d point location
data structure on the $yz$-projection of the rectangles.
Interval trees caused the query time to increase by a logarithmic factor,
while Chan's 2-d structures achieved $O(\log\log n)$ query time
via a complicated van-Emde-Boas-like recursion.
We can thus summarize this approach loosely by
the following recurrence for the query time (superscripts refer to the
dimension):
\[Q^{(3)}(n) = O(Q^{(2)}(n)\log n) \text{ and } Q^{(2)}(n) =Q^{(2)}(\sqrt{n}) + O(1)\implies Q^{(3)}(n)=O(\log n\log\log n).
\]
(Note that naively increasing the fan-out of the interval tree could reduce the query time
but would blow up the space usage.)

In the pointer machine model, 
the current best data structure
by Rahul~\cite{r15}, with $O(\log^{3/2}n)$ query time, required an even more complicated combination of interval trees, 
Clarkson and Shor's random sampling technique, 
3-d rectangle stabbing, 
and 2-d orthogonal point location.

To avoid the extra $\log\log n$ factor, we cannot afford to
use Chan's 2-d orthogonal point location structure as a subroutine;
and we cannot work with just $yz$-projections, which
intuitively cause loss of efficiency.
Instead, we propose a more direct solution based on a
new van-Emde-Boas-like recursion, aiming for a new recurrence of the form
\[Q^{(3)}(n) =Q^{(3)}(\sqrt{n}) + O(\log n).\]
The $O(\log n)$ term arises from the need to solve 2-d rectangle
stabbing subproblems, on projections along all three
directions (the $yz$-, $xz$-, and $xy$-plane), applied in a 
\emph{round-robin} fashion.
The new recurrence then solves to $O(\log n)$---notice how $\log\log$
disappears, unlike the usual van Emde Boas recursion!
In the word RAM model, we can even use known sub-logarithmic
solutions to 2-d rectangle stabbing to get $O(\log_w n)$ query time.

We emphasize that our new method is much \emph{simpler} than the previous,
slower methods, and is essentially self-contained except for  the use of a
known data structure for 2-d rectangle stabbing emptiness (which reduces to
standard 2-d orthogonal range counting).

One remaining issue is space.
In our new method, a rectangle is stored $O(\log\log n)$ times, due
to the depth of the recursion.  To achieve linear space, we
need another idea, namely, \emph{bit-packing} tricks, to compress
the data structure.  Because of
the rapid reduction of the universe size in the round-robin van-Emde-Boas
recursion, the amortized space in words per input box satisfies a recurrence
of the form
\[ s(n) =s(\sqrt{n}) + O\left(\frac{\log n}{w}\right)
\implies s(n)= O\left(\frac{\log n}{w}\right)=O(1).\]
Our new result on the subdivision case is obtained by a similar space-reduction trick.

\subparagraph*{3-d $5$-sided rectangle stabbing.} 
For 3-d rectangle stabbing, the previous solution by
Rahul~\cite{r15} was based on a grid-based, $\sqrt{n}$-way recursive approach
of Alstrup, Brodal, and Rauhe~\cite{abr00}, originally designed
for 2-d orthogonal range searching.  The fact that the approach
can be adapted here is nontrivial and interesting, since our input
objects are now more complicated (rectangles instead of points) and
the target query time is quite different (near logarithmic rather than
$\log\log$).  More specifically, Rahul first solved
the 4-sided case via a complicated data structure, 
and then applied Alstrup et al.'s technique
to reduce $5$-sided rectangles to $4$-sided rectangles, 
which led to a query-time recurrence similar to the following (subscripts denote
the number of sides, and output cost related to $k$ is ignored):
\[ Q_4(n) = O(\log n) \text{ and }
  Q_5(n) = 2Q_5(\sqrt{n}) + O(Q_4(n))
  \implies Q_5(n)=O(\log n\log\log n). \]

Intuitively, the reduction from the 5-sided to the 4-sided case
causes loss of efficiency.  To avoid the extra $\log\log n$ factor,
we propose a new method that is also based on Alstrup {\em et al.}'s recursive technique, but reduces 5-sided rectangles 
directly to $3$-sided rectangles, aiming for a
new recurrence of the form
\[ Q_3(n)=O(\log\log_w n) \text{ and }
Q_5(n) = 2Q_5(\sqrt{n})+ O(Q_3(n)).\]
During recursion, we do not put 4-sided rectangles in separate structures
(which would slow down querying), but instead use a common tree
for both 4-sided and 5-sided rectangles.
The new recurrence then
solves to $Q_5(n)=O(\log_wn)$ with an appropriate base case---notice
how $\log\log$ again disappears, and notice how this gives a new result
even for the 4-sided case!

One remaining issue is space.  Again, we can compress the data structure by
incorporating bit-packing tricks
(which was also used in Alstrup {\em et al.}'s original method).
For 4- and 5-sided rectangle stabbing, the space recurrence then solves
to linear. 

However, with space compression, a new issue arises.  The cost of reporting
each output rectangle in a query increases to $O(\log\log n)$ (the
depth of the recursion), because
of the need to \emph{decode} the coordinates of a compressed rectangle.
In other words, the query cost becomes $O(\log_w n + k\log\log n)$
instead of $O(\log_w n + k)$.
This extra decoding overhead also occurred in previous work on 2-d orthogonal
range searching by Alstrup {\em et al.}~\cite{abr00} and
Chan {\em et al.}~\cite{clp11}, and it is open how to avoid the overhead
for that problem without sacrificing space  (this is related to the so-called
\emph{ball inheritance problem}~\cite{clp11}).

We observe that for the 4- and 5-sided rectangle stabbing problem, a 
surprisingly simple idea suffices
to avoid the overhead: instead of keeping
pointers between consecutive levels of the recursion tree, we just keep
pointers directly from each level to the \emph{leaf} level.



\subparagraph*{3-d $6$-sided rectangle stabbing.} 
We can solve $6$-sided rectangle stabbing by using
our result for $5$-sided rectangle stabbing as a subroutine. 
However, the naive reduction via interval trees increases the query time by a $\log n$ factor
instead of $\log_w n$.  To speed up querying,
the standard idea is to use a tree with a larger fan-out $w^\varepsilon$.
This leads to various
colored generalizations of
2-d rectangle stabbing with a small number $w^{\varepsilon}$ of colors.
Much of our ideas can be extended to solve these colored subproblems in a straightforward
way, but a key subproblem, of answering colored 2-d dominance searching queries
in $O(\log\log_w n + k)$ time with linear space, is nontrivial.  We solve
this key subproblem via a clever use of 2-d shallow cuttings,
combined with a grouping trick, which may be of independent interest.

 

\section{Orthogonal Point Location in 3-d}\label{sec:opl3d}

\subparagraph{Preliminaries.}
Our solution to 3-d orthogonal point location will require
known data structures for \emph{2-d orthogonal point location} and \emph{2-d rectangle stabbing emptiness}.

\begin{lemma}\label{lem:opl2d}
Given $n$ disjoint axis-aligned rectangles in $[U]^2\ (n\le U\le 2^w)$,
there are data structures for point location with
$O\left(\frac{n\log U}{w}\right)$ words of space and
\begin{quote}
\begin{itemize}
\item
$O\left(\log n\right)$ query time in the pointer machine model;
\item
$O\left(\log_B n\right)$ query cost in the I/O model;
\item
$O\left(\min\{\log\log U, \,\log_w n\}\right)$ query time in the word RAM model.
\end{itemize}
\end{quote}
\end{lemma}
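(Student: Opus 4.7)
The plan is to prove all three bounds by combining known linear-space 2-d orthogonal point location structures with a rank-space reduction and coordinate bit-packing, so as to meet the stated $O(n\log U/w)$-word space bound in every model.

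Step one is a rank-space reduction. I would sort the $O(n)$ distinct $x$- and $y$-coordinates appearing in the rectangles and assign ranks in $[2n]$, storing the two sorted coordinate tables (each of $O(n)$ entries of $\log U$ bits) in $O(n\log U/w)$ bit-packed words. Each table would be augmented with a predecessor structure that translates a query coordinate in $[U]$ to its rank in $[2n]$ within the target query time: a balanced BST on the pointer machine ($O(\log n)$), a B-tree in I/O ($O(\log_B n)$), a $y$-fast trie in the word RAM ($O(\log\log U)$), and a fusion tree in the word RAM ($O(\log_w n)$); taking the better of the latter two yields $O(\min\{\log\log U,\log_w n\})$ as required.

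Step two solves the point location problem in rank space $[2n]^2$. In the pointer-machine model I would use Sarnak--Tarjan persistent balanced BSTs arising from a horizontal sweep over the rectangles, giving $O(\log n)$ query time with $O(n)$ nodes. In the I/O model I would use the external analogue, namely persistent B-trees (or an external segment-tree variant), with $O(\log_B n)$ I/Os and linear blocks. In the word RAM I would invoke Chan's structure~\cite{c13} for the $O(\log\log U)$ branch, and alternatively replace each persistent-tree node with a fusion-tree-style super-node of fan-out $\Theta(w)$ so that each routing step uses $O(1)$ word operations on packed keys, giving depth $O(\log_w n)$. Throughout, each node's $O(1)$ coordinates and pointers are laid out in $O(\log U/w)$ words, so the complete data structure fits in $O(n\log U/w)$ words in every model.

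The main obstacle is reconciling the bit-packed coordinate layout with tree navigation in the pointer-machine and I/O variants: packed coordinates must still be comparable against the query coordinate in $O(1)$ word operations, without inflating the stated $O(\log n)$ or $O(\log_B n)$ bounds. I would handle this via the standard arithmetic pointer-machine/word-RAM trick of amortizing $\Theta(w/\log U)$ coordinates per word, which preserves $O(1)$ routing at each persistent-tree or B-tree node and allows the output rectangle at the located leaf to be decoded in $O(1)$ time, so the three quoted query bounds go through unchanged.
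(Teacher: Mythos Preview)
Your proposal has a genuine gap in the pointer machine case. You plan to build a Sarnak--Tarjan persistent BST with $O(n)$ nodes and then pack it into $O(n\log U/w)$ words by laying out $\Theta(w/\log U)$ nodes per word. But in the arithmetic pointer machine model (as the paper defines it in Appendix~A), a word can hold only a constant number of real pointers, and you are not allowed pointer arithmetic: if a child link is stored as a packed micro-pointer (an index or offset), you cannot compute a memory address from it and dereference it. So a tree with $\Theta(n)$ inter-node links inherently needs $\Theta(n)$ words on this model, not $O(n\log U/w)$. Your final paragraph acknowledges the tension but resolves it with a vague appeal to a ``standard trick''; there is no such trick when the packed links cross word boundaries. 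The same obstruction hits your rank-space reduction step: binary-searching a packed sorted table requires random access into the $i$-th slot, which again is pointer arithmetic.

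The paper's proof avoids this entirely with a different, two-level decomposition. It takes a $(1/r)$-cutting with $r=\Theta(n\log U/w)$ cells, builds an ordinary (unpacked) point-location structure on the $O(r)$ cells, and then, inside each cell, stores the $O(n/r)=O(w/\log U)$ incident segments in a structure whose total bit-length is $O(w)$, i.e., a \emph{single} word. Micro-pointer chasing at the bottom level is then just shifts and masks within one word, which the APM model does allow. The top level has only $O(r)=O(n\log U/w)$ real pointers, so the space bound holds. For the I/O and word-RAM models the paper agrees with you that straightforward packing of known linear-space structures suffices; the nontrivial content of the lemma is precisely the pointer-machine compression, and that is where your argument breaks.
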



\begin{lemma}\label{lem:stab2d}
Given $n$ (possibly overlapping) axis-aligned rectangles in $[U]^2\ (n\le U\le 2^w)$,
there are data structures for rectangle stabbing emptiness with
$O\left(\frac{n\log U}{w}\right)$ words of space and
\begin{quote}
\begin{itemize}
\item
$O\left(\log n\right)$ query time in the pointer machine model;
\item
$O\left(\log_B n\right)$ query cost in the I/O model;
\item
$O\left(\log_w n\right)$ query time in the word RAM model.
\end{itemize}
\end{quote}
\end{lemma}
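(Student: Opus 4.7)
My plan is to reduce 2-d rectangle stabbing emptiness to four 2-d orthogonal dominance counting queries via a standard inclusion--exclusion argument, and then invoke known dominance counting data structures in each of the three computational models.

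For the reduction, observe that a rectangle $R = [a,b] \times [c,d]$ stabs a query point $(x_q, y_q)$ iff $a \le x_q$, $b \ge x_q$, $c \le y_q$, and $d \ge y_q$. Letting $C(x,y)$ denote $|\{R : a \le x \text{ and } c \le y\}|$, inclusion--exclusion yields
\[
\#\{R\ \text{stabbing}\ (x_q, y_q)\} \;=\; C(x_q, y_q) \;-\; C_1(x_q, y_q) \;-\; C_2(x_q, y_q) \;+\; C_3(x_q, y_q),
\]
where $C_1, C_2, C_3$ replace $a \le x$ by $b < x$ and/or $c \le y$ by $d < y$. Each term is a 2-d dominance counting query against a fixed $n$-point set in $[U]^2$ (one of the four corners of the input rectangles, with a $\pm 1$ shift to convert strict to weak inequalities), and the stabbing set is nonempty iff the signed sum is positive. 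So it suffices to build four 2-d dominance counting structures on $n$-point sets in $[U]^2$ achieving the space and time bounds stated in the lemma.

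For each model I would invoke a well-known dominance counting structure: a persistent balanced binary search tree (keyed on $y$, swept by $x$) for the pointer machine, giving $O(\log n)$ time and $O(n)$ words; a persistent B-tree variant for the I/O model, giving $O(\log_B n)$ I/Os and $O(n/B)$ blocks; and a wavelet tree (or bit-packed range tree with fusion-node navigation) for the word RAM, giving $O(\log_w n)$ time and $O(n \log U / w)$ words, using the bound $U \le 2^w$ for the packing. Summing over the four corner structures preserves the stated space and only multiplies the query time by a constant.

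The step requiring the most care is the word RAM case, where we must simultaneously achieve both the compressed space and the sub-logarithmic query time; this is handled by storing $\log U$ bits per coordinate per level of the wavelet tree and navigating each level in $O(1)$ via precomputed rank tables of $o(n)$ bits. The pointer machine and I/O bounds are then essentially immediate from the corresponding classical range counting results.
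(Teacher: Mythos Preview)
Your reduction to 2-d dominance counting via inclusion--exclusion is exactly what the paper does (it cites the same classical reduction), and citing known counting structures in each model is the right template. The gap is in the space bound. The lemma asks for $O(n\log U/w)$ words, which is \emph{strictly sub-linear} whenever $\log U = o(w)$; your pointer-machine proposal (a persistent BST) and your I/O proposal (a persistent B-tree) only give $O(n)$ words, and you explicitly dismiss these two cases as ``essentially immediate from the corresponding classical range counting results.'' They are not: the classical structures store one real pointer per element and cannot be compressed below $\Theta(n)$ words without further ideas. The paper singles out exactly this issue---in the pointer machine model, packing several ``micro-pointers'' into a word breaks the model because you cannot dereference them---and gives a nontrivial fix: it uses Chazelle's compressed range tree (already $O(n\log n/w)$ words above the leaves), then truncates the tree so that each leaf holds $b=\Theta(w/\log U)$ points, packs each leaf's entire subproblem into a single word, and simulates pointer-chasing inside that word by arithmetic and shifts. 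Without this leaf-packing step, the space bound you need simply does not follow.

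A secondary concern: for the word RAM you lead with a wavelet tree, but a standard (binary) wavelet tree has $\Theta(\log n)$ levels and gives $O(\log n)$ query time, not $O(\log_w n)$. Your parenthetical alternative (``bit-packed range tree with fusion-node navigation'') is the right idea---this is essentially the J\'aJ\'a--Mortensen--Shi structure the paper cites---but the sentence you offer as justification (``storing $\log U$ bits per coordinate per level of the wavelet tree and navigating each level in $O(1)$'') still describes a $\Theta(\log U)$-level object and does not yield the claimed bound.
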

Proofs of Lemmata~\ref{lem:opl2d} and~\ref{lem:stab2d} are presented in Appendix~\ref{app:missing-details}. 


\subparagraph{Data structure.} We are now ready to describe our data structure for 3-d orthogonal point location. We focus on the pointer machine model first. 
At the beginning, we apply a rank space reduction (replacing input coordinates by their ranks) so that all coordinates are
in $[2n]^3$, where $n$ is the global number of input boxes.  Given a query point, we can initially find the ranks of its coordinates by three predecessor searches (costing $O(\log n)$ time in the pointer machine model).

We describe our preprocessing algorithm recursively.
The input to the preprocessing algorithm is a set of $n$ disjoint boxes that are assumed to be aligned to the $\left[ U_x\right] \times \left[ U_y\right] \times \left[ U_z\right]$ grid.  (At the beginning, $U_x=U_y=U_z=2n$.)

Without loss of generality, assume that $U_x\ge U_y,U_z$.
We partition the $\left[ U_x\right] \times \left[ U_y\right] \times \left[ U_z\right]$ grid into $\sqrt{U_x}$ equal-sized vertical slabs perpendicular to the $x$-direction. See Figure~\ref{fig:structure}.
(In the symmetric case $U_y\ge U_x,U_z$ or $U_z\ge U_x,U_y$, we partition along the $y$- or $z$-direction instead.)  We classify the boxes into two categories:

\begin{itemize}
	\item {\em Short boxes.} For each slab, define its short boxes to be those that lie completely inside the slab.
	
	\item {\em Long boxes.} Long boxes intersect the boundary (vertical plane) of at least one slab. Each long box $\mathcal{B}$ is broken into three disjoint boxes:
	\begin{itemize}
		\item {\em Left box.} Let $s_L$ be the slab containing the left endpoint (with respect to the $x$-axis) of $\mathcal{B}$. The left box is defined as $\mathcal{B} \cap s_L$.
		
		\item {\em Right box.} Let $s_R$ be the slab containing the right endpoint of $\mathcal{B}$. The right box is defined as $\mathcal{B} \cap s_R$.
		
		\item {\em Middle box.} The remaining portion of box $\mathcal{B}$ after removing its left and right box, i.e. $\mathcal{B} \setminus \left( \left( \mathcal{B}\cap s_L\right)\cup \left( \mathcal{B}\cap s_R\right)\right)$.
	\end{itemize}
\end{itemize}
\begin{figure}[h]
	\centering
	\includegraphics[scale=0.8]{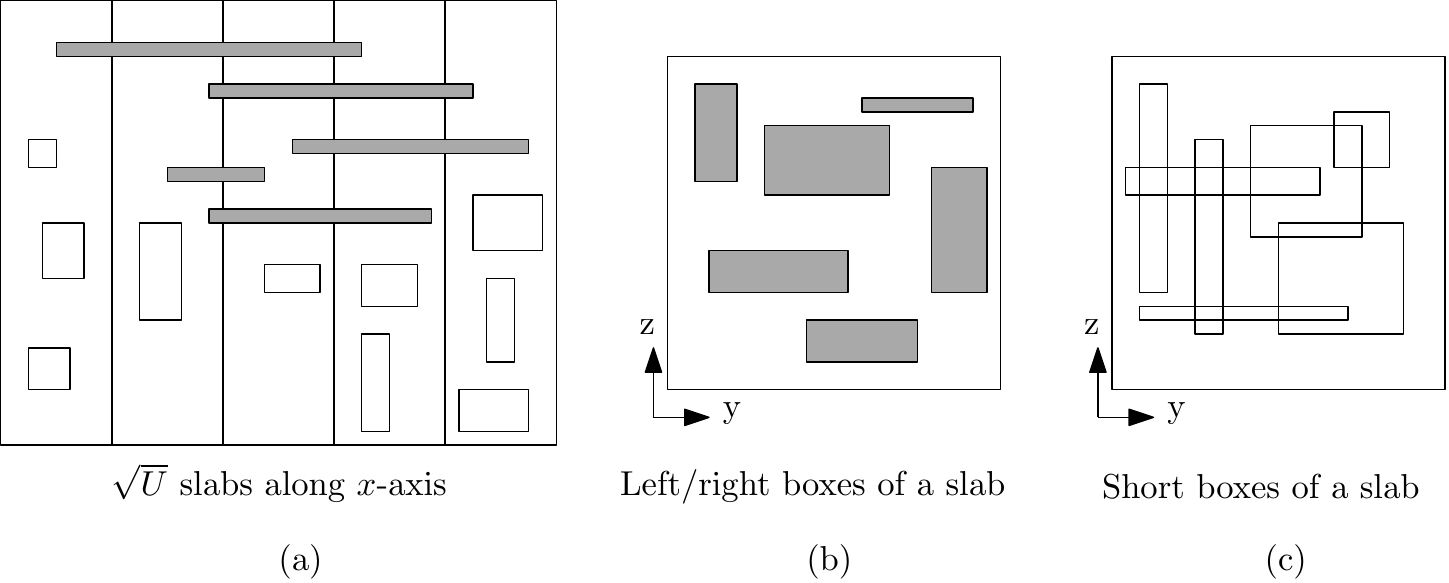}
	\caption{Boxes obtained after partitioning along the $x$-direction.}
	\label{fig:structure}
\end{figure}

We build our data structure as follows: 

\begin{enumerate}
	\item {\em Planar point location structure.} For each slab, we project its left boxes onto the $yz$-plane. The projected boxes remain disjoint, since they intersect a common boundary. We store them in a data structure  for 2-d orthogonal point location by Lemma~\ref{lem:opl2d}. We do this for the slab's right boxes as well.
	
	\item {\em Rectangle stabbing structure.} For each slab, we project its short boxes onto the $yz$-plane. The short boxes are not necessarily disjoint. We store them in a data structure for 2-d
rectangle stabbing emptiness by Lemma~\ref{lem:stab2d}.
	
	\item {\em Recursive middle structure}. We recursively build a {\em middle structure} on all the middle boxes.
	
	\item {\em Recursive short  structures}. For each slab, we recursively 
	build a {\em short structure} on all the short boxes inside the slab.
\end{enumerate}

By translation or scaling, these recursive short structures
or middle structure
can be made aligned to
the $\left[ \sqrt{U_x}\right] \times \left[ U_y \right]  \times \left[ U_z \right]$ grid.   
In addition, we store the mapping from left/right/middle boxes to their original boxes, as a list of pairs (sorted lexicographically) packed in $O\left(\frac{n \log (U_xU_yU_z)}{w}\right)$ words.


\subparagraph{Query algorithm.} The following lemma is crucial for deciding whether to query recursively the middle or the short structure.

\begin{lemma}
	Given a query point $\left( q_x, q_y , q_z \right)$, if the query with $(q_y,q_z)$ on the rectangle stabbing emptiness structure of the slab that contains $q_x$ returns
	\begin{itemize}
		\item {\sc Non-empty}, then the query point cannot lie inside a box stored in the middle structure, or 
		\item {\sc Empty}, then the query point cannot lie inside a box stored in the slab's short structure.
	\end{itemize}
\end{lemma}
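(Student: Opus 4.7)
The proof naturally splits into the two cases.

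The \textsc{Empty} case is essentially immediate. Suppose that no short box of the slab $s$ containing $q_x$ has its $yz$-projection stabbed by $(q_y,q_z)$. A short box $B$ of $s$ contains the query point $(q_x,q_y,q_z)$ if and only if $q_x$ lies in the $x$-range of $B$ and $(q_y,q_z)$ lies in the $yz$-projection of $B$. The latter fails by assumption, so $q$ cannot lie inside any short box stored in $s$.

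The \textsc{Non-empty} case requires the disjointness of the input boxes. The plan is to argue by contradiction: assume the stabbing emptiness query returns \textsc{Non-empty} so that there exists a short box $B\subseteq s$ whose $yz$-projection contains $(q_y,q_z)$, and simultaneously assume that $q$ lies in a middle box $M$ stored in the recursive middle structure. The key observation is that $M$ came from an original input box $\mathcal{B}$ spanning from some slab $s_L$ to some slab $s_R$, and that $M=\mathcal{B}\setminus((\mathcal{B}\cap s_L)\cup(\mathcal{B}\cap s_R))$ is exactly the restriction of $\mathcal{B}$ to the slabs strictly between $s_L$ and $s_R$. Thus, if $q\in M$, then the slab $s$ containing $q_x$ lies strictly between $s_L$ and $s_R$, which forces $\mathcal{B}$ to completely span $s$ in the $x$-direction.

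I now derive the contradiction from box disjointness. Pick any $x^*$ in the $x$-range of $B$. Since $B$ is a short box of $s$, we have $x^*\in s$, and since $\mathcal{B}$ spans all of $s$ in $x$, we also have $x^*\in$ $x$-range of $\mathcal{B}$. In the $y$-direction, the $y$-projection of $B$ contains $q_y$ (from the non-empty stabbing) and so does the $y$-projection of $\mathcal{B}$ (since $q\in M\subseteq\mathcal{B}$); analogously for $z$. Hence $B\cap\mathcal{B}\neq\emptyset$, contradicting the assumption that the input boxes are pairwise disjoint. I expect the only subtle point to be verifying cleanly that the slab $s$ is strictly interior to the $x$-range of $\mathcal{B}$ (so that $\mathcal{B}$ genuinely spans $s$), which follows directly from the definition of the middle box; everything else is a bookkeeping step.
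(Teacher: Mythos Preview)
Your proof is correct and follows essentially the same approach as the paper's own proof. The paper's version is terser---it simply observes that the query point is stabbed by the $x$-extension of some short box and invokes disjointness of the input boxes---but the underlying argument is exactly the one you spell out: the middle box's parent long box spans the slab $s$ in $x$, so the short box and the long box share the point $(x^*,q_y,q_z)$, contradicting disjointness.
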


\begin{proof}
If {\sc Non-empty} is returned, then the query point is stabbed by the extension (along the $x$-direction) of a box in the slab's short structure and cannot be stabbed by
any box stored in the middle structure, because of disjointness of the input boxes. If {\sc Empty} is returned, then obviously the query point cannot lie inside a box stored 
in the short structure.
\end{proof}

To answer a query for a given point $\left( q_x, q_y , q_z \right)$, we proceed as follows:

\begin{enumerate}
	\item Find the slab that contains $q_x$ by predecessor search over the slab boundaries. 
	\item Query with $\left( q_y , q_z \right)$ the planar point location structures at this slab. If a left or a right box returned by the query contains the query point, then we are done. 
	\item 
	Query with $\left( q_y , q_z \right)$ the rectangle stabbing emptiness structure at this slab. If it returns {\sc Non-empty}, query recursively the slab's short structure, else query recursively the middle structure (after appropriate translation/scaling of the query point).   
\end{enumerate}

In step~3, to decode the coordinates of the output box, we need to map from a left/right/middle box to its original box; this can be done naively by another predecessor search in the list of pairs we have stored.

\subparagraph{Query time analysis.}
Let $Q\left( U_x, U_y, U_z\right)$ denote the query time for our data structure in the  $\left[ U_x\right]  \times \left[ U_y\right] \times \left[ U_z\right]$ grid.  Observe that the number of boxes $n$ is trivially upper-bounded by $U_xU_yU_z$ because of disjointness.
The predecessor search in step~1,
the 2-d point location query in step~2, and
the 2-d rectangle stabbing query in step~3 all take
$O\left(\log n\right)=O\left(\log(U_xU_yU_z)\right)$ time
by Lemmata~\ref{lem:opl2d} and~\ref{lem:stab2d}.  
We thus obtain the following recurrence, assuming that $U_x\ge U_y,U_z$:
\[
Q\left( U_x, U_y, U_z\right) = Q\left( \sqrt{U_x}, U_y, U_z\right) + O\left( \log \left(U_x U_y U_z\right) \right).
\]
If $U_x=U_y=U_z=U$, then three rounds of recursion will partition along the $x$-, $y$-, and $z$-directions and decrease $U_x$, $U_y$, and $U_z$ in a round-robin fashion, yielding
\[Q\left( U, U, U \right) = Q\left( \sqrt{U},\sqrt{U},\sqrt{U} \right) + O\left(\log U\right),
\]
which solves to $Q\left(U,U,U\right)=O\left(\log U\right)$.
As $U=2n$ initially, we get $O(\log n)$ query time.

\subparagraph{Space analysis.}
Let $s\left( U_x,U_y,U_z \right)$ denote the \emph{amortized} number of words of space needed
per input box for our data structure in the $\left[ U_x\right] \times  \left[ U_y \right]  \times \left[ U_z \right]$ grid. 
The amortized number of words per input box for the 2-d point location and
rectangle stabbing structures is 
$O\left( \frac{\log(U_xU_yU_z)}{w}\right)$
by Lemmata~\ref{lem:opl2d} and~\ref{lem:stab2d}.  
We thus obtain the following recurrence, assuming that $U_x\ge U_y,U_z$:
\[
s\left( U_x, U_y, U_z\right) = s\left( \sqrt{U_x}, U_y, U_z\right) + O\left( \frac{\log \left(U_x U_y U_z\right)}{w} \right).
\]
Three rounds of recursion yield
\[s\left( U,U,U \right) = s\left( \sqrt{U},\sqrt{U},\sqrt{U} \right) +  O\left( \frac{\log U}{w} \right), \]
which solves to $s\left( U,U,U \right) = O\left( \frac{\log U}{w} \right)$.  As $U=2n$ initially, the total space in words is
$O\left(n\frac{\log n}{w}\right)\le O\left(n\right)$.
Note that the above analysis ignores an overhead of $O(1)$ words of space per node of the recursion tree, but by shortcutting degree-1 nodes, we can bound the number of nodes in the recursion tree by $O\left(n\right)$.
To summarize, we claim the following results:

\begin{theorem}\label{thm:opl3d}
Given $n$ disjoint axis-aligned boxes in 3-d,
there are data structures for point location with
$O\left(n\right)$ words of space and
 $O(\log n)$ query time in the pointer machine model,
$O(\log_B n)$ query cost in the I/O model, and 
$O(\log_w n)$ query time in the word RAM model.
\end{theorem}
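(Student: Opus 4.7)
The plan is to observe that the construction, query algorithm, and two recurrences developed in the paragraphs immediately preceding the theorem are all model-independent: what changes between the three models is only the cost of the 2-d orthogonal point location and 2-d rectangle stabbing emptiness subroutines (supplied by Lemmata~\ref{lem:opl2d} and~\ref{lem:stab2d}), together with the cost of the initial predecessor searches used for the rank-space reduction. So I would prove the theorem simply by instantiating the query-time recurrence and the space recurrence in each of the three models.

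For the query-time bounds, the pointer-machine case is already handled above via the three-round, round-robin recurrence $Q(U,U,U) = Q(\sqrt{U},\sqrt{U},\sqrt{U}) + O(\log U)$, which solves to $O(\log U) = O(\log n)$. In the I/O model, Lemmata~\ref{lem:opl2d} and~\ref{lem:stab2d} replace the per-level cost of the 2-d subroutines by $O(\log_B(U_xU_yU_z)) = O(\log_B U)$, so after three rounds I would obtain $Q(U,U,U) = Q(\sqrt{U},\sqrt{U},\sqrt{U}) + O(\log_B U)$; since the per-level costs form a geometric series in $\log_B U$, this solves to $O(\log_B U) = O(\log_B n)$. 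The word RAM case is identical, with $\log_w$ in place of $\log_B$, yielding $O(\log_w n)$. The one-time rank-space step (three coordinate predecessor queries on $n$ input values) costs $O(\log n)$, $O(\log_B n)$, or $O(\log_w n)$ via a standard balanced BST, a B-tree, or a fusion tree, respectively, so it never dominates.

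For space, the argument transfers unchanged across all three models because Lemmata~\ref{lem:opl2d} and~\ref{lem:stab2d} give the \emph{same} $O\!\left((n\log U)/w\right)$ word bound for their 2-d subroutines in every model. The amortized recurrence $s(U,U,U) = s(\sqrt{U},\sqrt{U},\sqrt{U}) + O((\log U)/w)$ again telescopes to $O((\log U)/w) = O(1)$ words per input box since $U = O(n)$ and $\log n = O(w)$, totaling $O(n)$ words.

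The one bookkeeping point I would verify carefully is that the ``amortized words per input box'' formulation really produces $O(n)$ total words: long boxes are cut into three pieces per recursive level, so if the recursion tree branched without control the box count could grow multiplicatively. The saving grace is that middle and short subproblems partition the non-left/right portion of each long box disjointly, so the number of boxes in each \emph{level} is bounded by a constant times the original $n$, and the recursion tree, once degree-$1$ nodes are shortcut as already remarked in the space analysis, has only $O(n)$ nodes. This is the only place I expect to be mildly subtle, but it presents no real obstacle, and the three claimed bounds in the theorem follow at once.
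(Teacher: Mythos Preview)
Your overall plan is right and matches the paper's: re-run the round-robin recurrence with the model-specific per-level costs from Lemmata~\ref{lem:opl2d} and~\ref{lem:stab2d}. But there is a real gap in the I/O and word-RAM analyses: you never specify a base case, and the ``geometric series'' argument alone does not yield the stated bounds.

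The recursion $Q(U,U,U)=Q(\sqrt{U},\sqrt{U},\sqrt{U})+O(\log_B U)$, if run all the way down to $U=O(1)$, has depth $\Theta(\log\log n)$, and every level incurs at least $\Omega(1)$ cost regardless of how small $\log_B U$ becomes. Hence the total is only $O(\log_B n+\log\log n)$, and when $B$ (respectively $w$) is as large as $n^{\Theta(1)}$, the $\log\log n$ term dominates. The paper avoids this by terminating the recursion early.

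For the I/O model the fix is easy and you could have supplied it yourself: stop when $U\le B^{1/3}$, so that $n\le U^3\le B$ and the whole subproblem fits in one block, giving $Q=O(1)$. This cuts the depth to $O(\log\log_B n)\le O(\log_B n)$, after which your geometric-series argument goes through.

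For the word RAM the fix is \emph{not} routine, and this is the substantive omission. The paper stops when $U\le w$ (so $n\le w^3$ and $\log_w n=O(1)$) and switches to an entirely different structure: 3-d orthogonal point location is reduced to 6-d dominance emptiness, for which a known word-RAM structure achieves $O\!\left(n(\log_w n)^4\right)=O(n)$ space and $O\!\left((\log_w n)^5\right)=O(1)$ query time in this regime. Without invoking some such base-case structure, your word-RAM bound would only be $O(\log_w n+\log\log n)$, not $O(\log_w n)$.

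Your remarks on the initial predecessor searches and on the space analysis (including the degree-1 shortcutting) are fine and agree with the paper.
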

\begin{proof}
The proof for the I/O model and the word RAM model can be found in  Section C of the appendix.
\end{proof}

Further applications of this framework to subdivisions, 4-d and higher dimensions are provided in Section D of the appendix.

\section{Rectangle Stabbing}
 \subsection{Preliminaries}

\begin{lemma}\label{lemma:slow}
(Rahul~\cite{r15}) There is a data structure of size $O(n)$ words which can answer a $5$-sided 3-d rectangle stabbing 
query in $O(\log^2 n\cdot\log\log n + k)$ time.
\end{lemma}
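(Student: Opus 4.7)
The plan is to peel off the two finite dimensions of a 5-sided box by two nested interval trees, reducing the problem at the bottom to 3-d dominance (3-sided) stabbing, for which a linear-space structure with $O(\log\log n + k)$ query time is available (the Afshani--Chan result cited in the introduction).

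First, writing each box as $B=[x_1,x_2]\times[y_1,y_2]\times(-\infty,z_2]$ without loss of generality, I would build a balanced interval tree $T_x$ on the $x$-coordinates; each $B$ is stored at the unique node $v\in T_x$ whose split value $m$ lies inside $[x_1,x_2]$, and is cut there into a left piece $B_L=[x_1,m]\times[y_1,y_2]\times(-\infty,z_2]$ and a right piece $B_R=[m,x_2]\times[y_1,y_2]\times(-\infty,z_2]$. The query $q=(q_x,q_y,q_z)$ descends $T_x$ along its $x$-search path; at each visited $v$ with split $m$, exactly one of the two halves is relevant and, since $q_x$ automatically satisfies the newly created side at $m$, the remaining constraints on that piece form a 4-sided 3-d stabbing subproblem (bounded by $x_1$ or $x_2$, $y_1$, $y_2$, and $z_2$).

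Second, at each $v$ I would apply the same trick to the $y$-coordinates of the pieces stored at $v$, obtaining a secondary tree $T_y^v$. Each piece is stored at the unique node $u\in T_y^v$ whose $y$-split $m'$ lies inside its $y$-range, where it is cut again in $y$. At $u$, the $y$-side created at $m'$ is automatically satisfied by $q$, leaving a 3-sided dominance condition on three surviving coordinates (one $x$-endpoint, one $y$-endpoint, and $z_2$). These points are loaded into the linear-space 3-sided stabbing structure mentioned above.

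For the analysis, each input box lives at exactly one node of $T_x$ and each of its two halves at exactly one node of the corresponding $T_y^v$, so the total point count across all tertiary 3-sided structures is $O(n)$; since each is linear, the total space is $O(n)$ words. A query visits $O(\log n)$ nodes of $T_x$, and at each it descends $T_y^v$ through $O(\log n)$ nodes, performing $O(\log^2 n)$ dominance queries in total; summing $O(\log\log n+k_i)$ with $\sum k_i=k$ yields the claimed $O(\log^2 n\cdot\log\log n+k)$ query time. The main subtlety to verify is that at each interval-tree descent the side introduced by the cut is truly redundant for the query (so the subproblem genuinely drops one side), and that rank-space reduction is applied carefully so the base-case bound is $\log\log n$ rather than $\log\log U$.
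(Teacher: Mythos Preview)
Your proposal is correct and matches the construction the paper is citing. The paper does not give its own proof of this lemma; it is quoted from Rahul~\cite{r15}, and the only description offered in the present paper appears later (in the proof of Lemma~\ref{lemma:topkstabslow}), where it is summarized exactly as you have it: ``we can answer a 3-d 5-sided rectangle stabbing query by querying $O(\log^2 n)$ 3-d dominance reporting structures.'' Two nested binary interval trees on the $x$- and $y$-extents, with linear-space $O(\log\log n+k)$ 3-d dominance structures at the leaves, is precisely that reduction, and your space and time accounting are correct.
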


\begin{lemma}\label{lemma:leaf}
(Leaf structure.) For a set of size $O(w^{1/4})$, 
there is a data structure of size $O(w^{1/4})$ words which can answer
 a $5$-sided 3-d rectangle stabbing  query in $O(1 + k)$ time.
 \end{lemma}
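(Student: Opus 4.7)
The plan is to exploit word-level parallelism, which is feasible because the input is so tiny. Since $n=O(w^{1/4})$, after a rank-space reduction every coordinate of every input box is an integer in $[2n]$ and hence takes only $O(\log n)=O(\log w)$ bits. A $5$-sided box is described by $5$ such coordinates plus a bounded/unbounded axis label, so the entire input requires only $O(w^{1/4}\log w)=o(\sqrt{w})$ bits and fits in $O(1)$ words. The $O(w^{1/4})$-word budget is more than enough also to store, in a plain array indexed $1,\dots,n$, the original identifiers of the boxes.

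To prepare for queries, for each of the $5$ coordinate fields (left $x$, right $x$, left $y$, right $y$, and the one-sided $z$-bound, say) I would pack the $n$ values coming from the $n$ boxes into a single word, placing each value in an aligned slot of $\lceil\log(2n)\rceil+1$ bits and reserving one guard bit between consecutive slots. So the data structure consists of $5$ packed coordinate words (one per field), together with the array of identifiers. I would also orient all cases to a single canonical orientation up front, so that for each box the $5$ conditions look the same (two interval-containment tests in $x$, two in $y$, one one-sided test in $z$).

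A query $(q_x,q_y,q_z)$ is answered as follows. First, replicate each query coordinate $n$ times into a word matching the slot layout, using either a multiplication with a precomputed constant or a word-sized precomputed table. Then, using the standard packed-subtract-and-mask trick for comparing sub-words in parallel, compute in $O(1)$ time a $n$-bit mask for each of the $5$ conditions; the guard bits guarantee that carries from one slot do not contaminate neighboring slots. ANDing the five masks together yields a single mask $M$ whose set bits correspond exactly to the boxes containing $q$. Finally, iterate through the at most $k$ set bits of $M$ by repeated least-significant-bit extraction (available as a RAM instruction, or via a $w^{O(1)}$-size global lookup table shared across all leaf structures), and for each index look up its identifier in the auxiliary array. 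This gives query time $O(1+k)$.

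The only real obstacle is making sure the packed-comparison arithmetic handles both $\le$ and $\ge$ comparisons correctly within the same layout; this is routine provided the slot width accommodates one extra guard bit per slot, which we can afford since $n\cdot(\log w+O(1))\ll w$. All auxiliary constants (replication multipliers, high-bit patterns, LSB-table) can be computed once in $O(w^{O(1)})$ preprocessing and shared globally across all leaves, so each individual leaf contributes only $O(w^{1/4})$ words of local storage, matching the stated bound.
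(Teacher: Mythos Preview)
Your argument is correct: packing the five coordinate arrays into single words, broadcasting the query, doing five parallel packed comparisons, \textsc{and}-ing the resulting bitmasks, and then peeling off the set bits gives $O(1+k)$ query time and $O(w^{1/4})$ words of space. The arithmetic checks out since $n\cdot O(\log w)=o(w)$.

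The paper, however, takes a different route. As indicated by the back-reference in the proof of Lemma~\ref{lemma:topksmalldomin} (``as explained in the proof of Lemma~\ref{lemma:leaf} we can keep all pre-computed solutions in $O(m)$ words''), the intended argument is \emph{tabulation}: after rank-space reduction there are only $O(n^3)=O(w^{3/4})$ combinatorially distinct query cells, and for each one can store an $n$-bit incidence vector (or an explicit sorted list) of the stabbed boxes; the whole table then occupies $O(n^4)=O(w)$ bits, i.e.\ $O(1)$ words, on top of the $O(w^{1/4})$ words for the original identifiers. A query first locates its cell via three $O(1)$-time packed predecessor searches and then reads off the answer. Your approach avoids the cubic-in-$n$ precomputation and is arguably cleaner to implement; the paper's tabulation approach has the advantage that it extends verbatim to the weighted/top-$k$ variant (Lemmata~\ref{lemma:topksmalldomin} and~\ref{lemma:topk-stab-small}), where one wants the stabbed boxes in weight order rather than in index order---something your bitmask does not directly give.
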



\subsection{3-d $5$-sided rectangle stabbing}\label{sec:5-sided}

\begin{figure}[t]
 \centering
\includegraphics[scale=1]{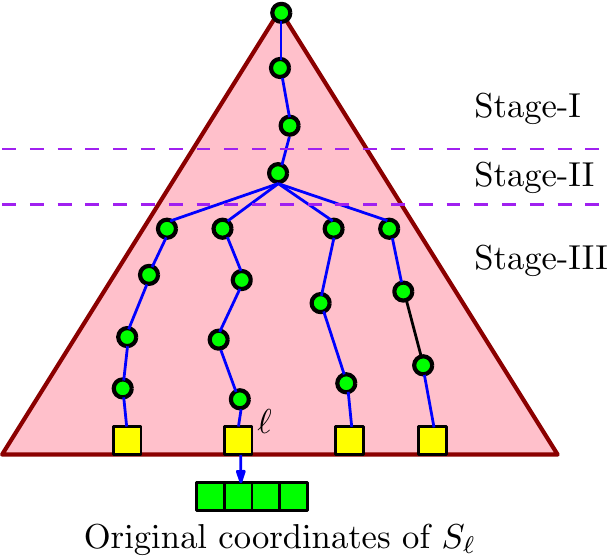}
\caption{Global Structure}
\label{fig:overall-structure}
\end{figure}


\noindent
{\bf Skeleton of the structure.} 
 Consider the projection of the rectangles of $S$ on to the $xy$-plane and impose  an orthogonal 
$\IntRange{2\sqrt{\frac{n}{\log^4 n}}} \times \IntRange{2\sqrt{\frac{n}{\log^4n}}}$ grid such that 
each horizontal and vertical slab contains the projections of $\sqrt{n\log^4n}$ sides of $S$.
This grid is the root node of our tree ${\cal T}$.
 For each vertical and horizontal slab, we recurse on the rectangles of $S$ which are {\em sent} to that slab. 
At each node of the recursion tree, if we have $m$ rectangles in the subproblem,
the grid size changes to  
$\IntRange{2\sqrt{\frac{m}{\log^4m}}} \times \IntRange{2\sqrt{\frac{m}{\log^4m}}}$.
We stop the recursion when a node has less than $w^{1/4}$ rectangles.

\subparagraph{Breaking the rectangles.}
The solution of Rahul~\cite{r15} {\em breaks} only one side to reduce  $5$-sided rectangles 
to $4$-sided rectangles, and then uses the solution for $4$-sided rectangle stabbing as a black box.
Unlike the approach of Rahul~\cite{r15}, we will break 
two sides of each $5$-sided rectangle to obtain $O(\log\log n)$ $3$-sided rectangles.

For a node in the tree, 
the intersection of every pair of  horizontal and 
vertical grid line defines a {\em grid point}. 
A rectangle $r\in S$ is associated with four root-to-leaf paths (as shown in
Figure~\ref{fig:overall-structure}). 
Any node (say, $v$) on  these four paths is classified w.r.t. $r$ into one of the three stages as follows:

\vgap
\noindent
{\it Stage-I.} The $xy$-projection of $r$ intersects none of the grid points.
Then $r$ is not stored at $v$, and sent to the child  
corresponding to the row or column  $r$ lies in.

\vgap
\noindent
{\it Stage-II.} The $xy$-projection of $r$  intersects at least one of the grid points. 
Then $r$ is broken into at most five disjoint pieces. 
 The first piece is a {\em grid rectangle}, which  is the bounding box of all the grid points lying inside $r$, as shown in 
Figure~\ref{fig:type-II}(b). The remaining four pieces are 
 two {\em column rectangles} and two {\em row rectangles} as shown in Figure~\ref{fig:type-II}(c)~and~(d), respectively.  
The grid rectangle is stored at $v$. Note that each column rectangle (resp., row rectangle) is now 
a $4$-sided rectangle in $\IR^3$ w.r.t. its column (resp., row), and is sent 
to its corresponding child node.

\begin{figure}[h]
 \centering
\includegraphics[scale=0.4]{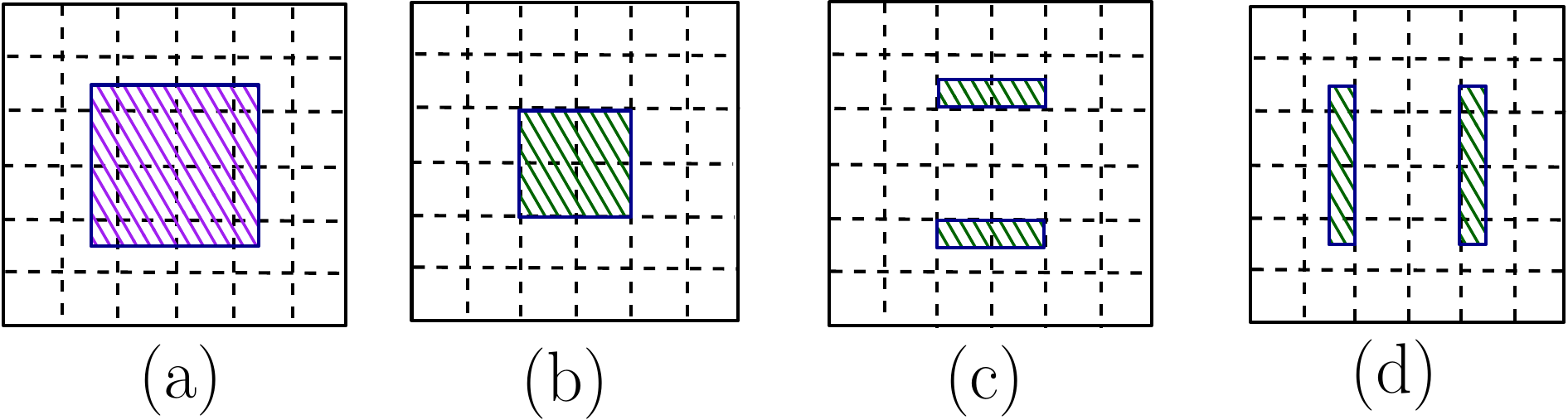}
\caption{Breaking a $5$-sided rectangle in (a) into 2 horizontal side rectangles (shown in (c)) and 2 vertical side rectangles (shown in (d)).}
\label{fig:type-II}
\end{figure}


\noindent
{\it Stage-III.} The $xy$-projection of a $4$-sided piece of $r$ intersects at least one of 
the grid points. Without loss of generality, assume that the $4$-sided rectangle $r$ is 
unbounded along the negative $x$-axis. 
Then the rectangle is broken into at most four disjoint pieces: 
 a {\em grid rectangle}, two {\em row rectangles}, and a  {\em column rectangle}, 
as shown in   Figure~\ref{fig:type-III}(b), (c) and (d), respectively. 
The grid rectangle and the two row rectangles are stored at $v$, 
and the column rectangle is sent to its corresponding child node.
Note that the two row rectangles are now $3$-sided rectangles in $\IR^3$ w.r.t. their 
corresponding rows (unbounded in one direction along $x$-, $y$- and $z$-axis).

\begin{figure}[h]
 \centering
\includegraphics[scale=0.4]{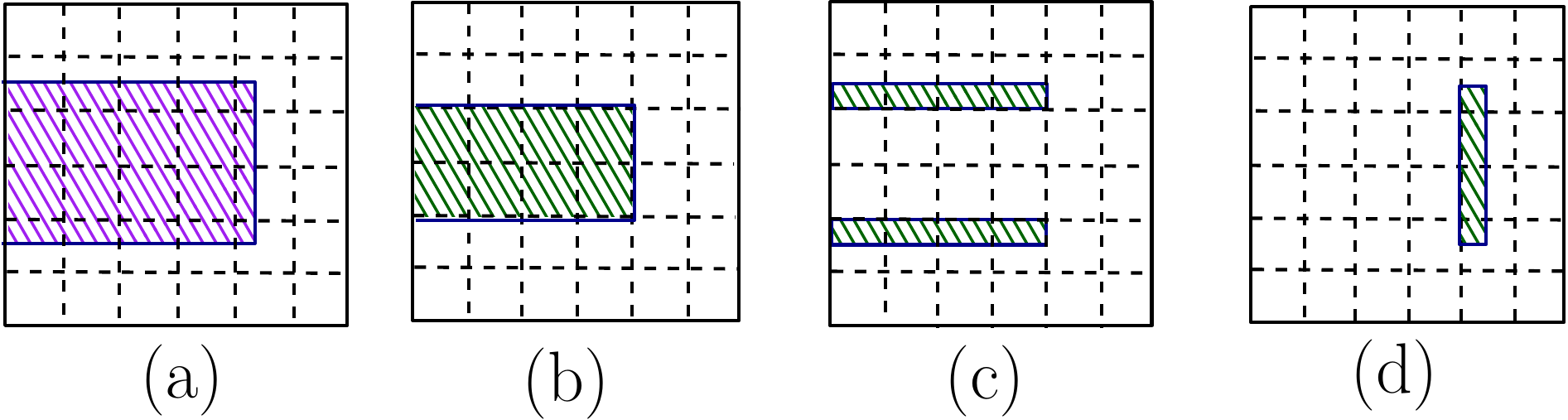}
\caption{Breaking a $4$-sided rectangle in (a) into 2 horizontal side rectangles (shown in (c)) and 2 vertical side rectangles (shown in (d)).}
\label{fig:type-III}
\end{figure}


\noindent
{\bf Encoding structures.}  Let $S_v$ be the set of rectangles stored at a 
node $v$ in the tree. We apply a rank space reduction (replacing input coordinates by their ranks) 
so that the  coordinates of all the endpoints are in $[2|S_v|]^3$.
If $v$ is a leaf node, then we build an instance of Lemma~\ref{lemma:leaf}.
Otherwise, the following three structures will be built using $S_v$:

\vgap
\noindent
{\it (A) Slow structure.} An instance of 
Lemma~\ref{lemma:slow} is built on $S_v$  to answer the 3-d $5$-sided rectangle stabbing 
query when the output size is ``large''. 

\noindent
{\it (B) Grid structure.}
For each cell $c$ of the grid, among the rectangles which completely cover $c$, 
pick the $\log^3|S_v|$ rectangles with the largest span along the $z$-direction. 
Store them in a list $\mathit{Top(c)}$ in decreasing order of their span.  

\noindent
{\it (C) 3-d dominance structure.} 
For a given row or column in the grid, based on the $3$-sided rectangles stored in it, 
a linear-space 3-d dominance reporting structure~\cite{p08b,c13} is built.
This structure is built for  each row and column slab.

\subparagraph{Where are the original coordinates stored?}
Unlike the previous approaches for indexing points~\cite{abr00,clp11}, 
we use a somewhat unusual approach for storing the original coordinates 
of each rectangle. 
In the process of breaking each $5$-sided rectangle described above, 
there will be four leaf nodes where portions of the rectangle 
will get stored. We will choose these leaf nodes to store the original coordinates of the rectangle 
(see Figure~\ref{fig:overall-structure}).
The benefit is that each $3$-sided rectangle (stored at a node $v$)
has to maintain a decoding pointer of length merely $O(\log |S_v|)$ to point to its original coordinates 
stored in its subtree.

\subparagraph{Query algorithm and analysis.} Given a query point $q$, we start at the root node and perform 
the following steps:
First, query the dominance structure corresponding to the horizontal and the vertical 
slab containing $q$.
Next, for the grid structure, locate the cell $c$ on the grid containing $q$. 
Scan the list $\mathit{Top(c)}$  to keep reporting  till (a) all the rectangles 
have been exhausted, or (b) a rectangle not containing $q$ is found. 
If case~(a) happens and $|\mathit{Top(c)}|=\log^3|S_v|$,  then we 
discard the rectangles reported till now, and query the slow structure. 
The decoding pointers will  be used to report the original coordinates 
of the rectangles.
Finally, we recurse on the horizontal and the vertical slab containing $q$. 
  If we visit a leaf node, then we query the leaf structure 
(Lemma~\ref{lemma:leaf}).

First, we analyze the space. 
Let $s(|S_v|)$ be the {\em amortized} number of bits needed per 
input $5$-sided rectangle in the subtree of a node $v$. The amortized number of bits needed 
per rectangle for the  encoding structures and the pointers 
to the original coordinates is $O(\log |S_v|)$. This leads to 
the following recurrence:
\[s(n)=s(\sqrt{n\log^4n}) + O(\log n) \]
which solves to $s(n)=O(\log n)$ bits. Therefore, the overall 
space is bounded by $O(n)$ words.

Next, we analyze the query time. To simplify the analysis, we will exclude the output size term while mentioning the query time. At the root, the time taken to query the grid and the dominance structure is $O(\log\log_wn)$. This leads to the following recurrence: 
\[Q(n)=2Q(\sqrt{n\log^4n}) + O(\log\log_wn) \]
with a base case of $Q(w^{1/4})=O(1)$. This solves to $Q(n)=O(\log_wn-\log\log_wn)=O(\log_wn)$.
For each reported rectangle it takes constant time to recover its original coordinates.
The time taken to query the slow structure is dominated by the output size. Therefore, the overall 
query time is $O(\log_wn + k)$.
\begin{theorem}\label{thm:rs-5-sided}

There is a data structure of size $O(n)$ words  
which can answer any 3-d $5$-sided rectangle stabbing query in $O(\log_wn + k)$ time. This is optimal in the 
word RAM model.
\end{theorem}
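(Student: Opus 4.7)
The plan is to take the construction already laid out above and verify three separate claims: (i) the breaking scheme together with the three encoding structures reports exactly the set of input 5-sided rectangles stabbed by $q$, (ii) the space recurrence solves to $O(n)$ words, and (iii) the query-time recurrence solves to $O(\log_w n + k)$. For correctness, I would first argue a coverage invariant: for each input rectangle $r$ stabbed by $q$, the query follows the unique root-to-leaf path determined by the grid cell of $q$ at each level, and $r$ is reported at exactly one node on this path. This invariant is established by tracing the three stages --- Stage I passes $r$ down one slab unchanged, Stage II stores the grid rectangle at $v$ and sends four 4-sided pieces to the appropriate children, and Stage III finally deposits two 3-sided row rectangles at $v$ into the dominance structure and forwards at most one 4-sided column piece. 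Because the pieces at any node $v$ are pairwise disjoint and the children partition the unit cells of $v$, the query path visits $r$ in exactly one stage.

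The second step is the case analysis of the grid lookup. At node $v$, the list $\mathit{Top}(c)$ of the $\log^3|S_v|$ rectangles with largest $z$-span covering cell $c$ has the property that rectangles containing $q$ form a prefix of the list (since ``contains $q$'' depends only on the $z$-span exceeding $q_z$ once $c$ is covered in $xy$). Thus scanning until a non-stabbing rectangle is met reports all stabbed grid rectangles at $v$, except when the list is saturated; in that case the output at $v$ is $\Omega(\log^3|S_v|)=\Omega(\log^3 n)$, so we can pay for one call to the slow structure of Lemma~\ref{lemma:slow} whose $O(\log^2 n\log\log n)$ cost is absorbed into $O(k)$. Reporting 3-sided row/column rectangles stored at $v$ is handled by the $O(\log\log_w n)$-time 3-d dominance reporter. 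Decoding original coordinates is a constant-time pointer dereference to the designated leaf, since every input rectangle fixes four specific leaves storing its coordinates and each stored piece at $v$ carries an $O(\log|S_v|)$-bit pointer into its subtree.

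The space recurrence $s(n)=s(\sqrt{n\log^4 n})+O(\log n)$ bits per rectangle telescopes to $O(\log n)$ bits, hence $O(n)$ words in total after bit-packing and shortcutting degree-$1$ chains of $\cT$. The query-time recurrence $Q(n)=2Q(\sqrt{n\log^4 n})+O(\log\log_w n)$ with base $Q(w^{1/4})=O(1)$ solves to $O(\log_w n)$ by unrolling: the depth is $\Theta(\log\log n)$ and the geometric doubling sum telescopes to $O(\log\log_w n \cdot 2^{\log\log n})=O(\log_w n)$. Optimality is inherited from \Patrascu's $\Omega(\log_w n)$ lower bound for 2-d rectangle stabbing, already a lower bound for 3-d $5$-sided. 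The main obstacle I expect is a clean proof of the coverage invariant together with the saturation-triggered fallback to the slow structure: one must simultaneously ensure no double-reporting across stages, that every stabbed rectangle is encountered before its ancestor path diverges from the query path, and that the $\log^3|S_v|$ threshold in $\mathit{Top}(c)$ is small enough to fit the bit-packed space budget yet large enough to amortize the slow-structure cost against $k$.
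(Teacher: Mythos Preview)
Your proposal follows the paper's approach exactly --- the breaking scheme, the three encoding structures, the leaf-storage of original coordinates, and both recurrences are all as in the paper. Two points need correction, however.

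First, the phrase ``unique root-to-leaf path determined by the grid cell of $q$'' is wrong: at every internal node the query recurses into \emph{both} the row slab and the column slab containing $q$, so the set of visited nodes is a binary tree, not a path (this is precisely the factor $2$ in your own recurrence $Q(n)=2Q(\sqrt{n\log^4 n})+O(\log\log_w n)$). The correct invariant is that the disjoint pieces of a stabbed rectangle $r$ containing $q$ always lie in the row or column slab of $q$, hence the branching query still reaches the unique piece containing $q$; disjointness of pieces then prevents double-reporting across the two recursive branches.

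Second, and more substantively, your solution of the query-time recurrence is incorrect. You write that the depth is $\Theta(\log\log n)$ and that the sum is $O(\log\log_w n\cdot 2^{\log\log n})=O(\log_w n)$, but $\log\log_w n\cdot 2^{\log\log n}=\log\log_w n\cdot \log n$, which is \emph{not} $O(\log_w n)=O(\log n/\log w)$. The mistake is twofold: the recursion stops at the base case $w^{1/4}$, so the depth is $d\approx\log\log_w n$ (giving $2^d\approx\log_w n$ leaves, not $\log n$), and the per-node cost $O(\log\log_w m)$ shrinks with the level. A correct unrolling sets $m_j\approx n^{1/2^j}$ and sums
\[
\sum_{j=0}^{d} 2^j\cdot O\bigl(\log\log_w m_j\bigr)
\;=\;\sum_{j=0}^{d} 2^j\cdot O(d-j)
\;=\;O(2^d)\;=\;O(\log_w n),
\]
using $\sum_{i\ge 0} i/2^i=O(1)$. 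This is the computation the paper alludes to with ``solves to $Q(n)=O(\log_w n)$''; your version as written does not establish the bound.
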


Our solution for 2-d top-$k$ rectangle stabbing can be found in Section E of the appendix.

\subsection{3-d $6$-sided rectangle stabbing}
In this section we will prove the following result. 

\begin{theorem}
\label{theor:stab6sid-main}
There is a linear-space data structure that answers 3-d rectangle stabbing queries in $O(\log^2_w n+k)$ time.
\end{theorem}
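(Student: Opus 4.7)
The plan is to reduce 3-d $6$-sided rectangle stabbing to 3-d $5$-sided rectangle stabbing via an interval tree of large fan-out, losing only the $O(\log_w n)$ factor matching the tree's depth.

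I would build an interval tree $\mathcal{T}$ on the $x$-extents of the input rectangles with fan-out $f=w^{\varepsilon}$ for a small constant $\varepsilon>0$, so $\mathcal{T}$ has depth $O(\log_w n)$. Each rectangle $R$ with $x$-range $[x_1,x_2]$ is stored at the unique node $v$ where $x_1$ and $x_2$ fall into distinct children, say children $i$ and $j$ with $i<j$; I will think of $(i,j)\in[f]^2$ as a small-alphabet \emph{color pair} attached to $R$ at $v$. At query time we descend $\mathcal{T}$ following $q_x$; at each visited $v$, with $q_x$ lying in child $c$, the rectangles at $v$ that contain $q_x$ are exactly those with $i\le c\le j$, and we must additionally test the $(y,z)$-projection against $(q_y,q_z)$.

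I would split the work at each such $v$ into three cases. In case (A) $c=i$, the only surviving $x$-constraint is $x_1\le q_x$, so grouping the rectangles at $v$ by left-color and instantiating one copy of Theorem~\ref{thm:rs-5-sided} per group turns the query into a 3-d $5$-sided stabbing query, costing $O(\log_w n+k_v)$ time per visit and $O(|S_v|)$ space in total. Case (B) $c=j$ is symmetric. The interesting case is (C) $i<c<j$: no $x$-constraint remains, but we must answer a 2-d $4$-sided stabbing query on the $(y,z)$-projections with the extra bracketing constraint $i<c<j$ on the color pair. I would handle this by re-running the grid-based recursive framework of Section~\ref{sec:5-sided} on these projections, replacing the internal 3-d dominance structure used there by a \emph{colored} 2-d dominance structure that takes the small-alphabet bracketing constraint as an additional filter on the color pair.

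The main obstacle is exactly this colored 2-d dominance substructure: I need $O(\log\log_w n+k)$ query time in \emph{linear} space, even though each point carries a color from a universe of size up to $w^{\varepsilon}$. I would obtain it by starting from the standard $O(\log\log_w n)$-time uncolored dominance structure built from 2-d shallow cuttings, and then adding a grouping trick that packs the colors of the points in each shallow-cutting cell into $O(1)$ machine words, so that the bracketing filter can be evaluated in $O(1)$ time per reported point via word-level parallelism, avoiding any $f$-fold space blow-up. Once this substructure is in place, each node of $\mathcal{T}$ visited by a query is processed in $O(\log_w n+k_v)$ time; summing over the $O(\log_w n)$ levels of $\mathcal{T}$ gives the claimed $O(\log_w^2 n+k)$ bound, and linearity of space follows because each input rectangle is stored at exactly one node of $\mathcal{T}$ and every secondary structure inherits the linear-space bit-packing argument from Section~\ref{sec:5-sided}.
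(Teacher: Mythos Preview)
Your high-level skeleton matches the paper's: an interval tree of fan-out $w^{\varepsilon}$ (the paper does it on $z$, you on $x$, which is immaterial), with the ``endpoint'' slabs handled by Theorem~\ref{thm:rs-5-sided} and the ``middle'' part handled by re-running the grid recursion of Section~\ref{sec:5-sided} with the $3$-sided dominance substructure replaced by a small-alphabet colored analogue. You also correctly isolate the colored $2$-d dominance problem (equivalently, $z$-restricted $4$-sided stabbing with $z\in[w^{\varepsilon}]$) as the crux.

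The gap is in your proposed solution to that crux. Building an \emph{uncolored} shallow cutting and then filtering by color with bit-packing does not give $O(\log\log_w n+k)$ time. The cell of an uncolored cutting that contains $q$ has size governed by the \emph{uncolored} dominance count $K$, not by the colored output size $k$; when $K\gg k$ (e.g.\ $K=\Theta(n)$ and $k=1$) you must still touch $\Omega(K\log f/w)$ words just to run the bracketing filter, and for $K$ larger than roughly $w/\log w$ you cannot pack the cell's colors into $O(1)$ words at all. Word-level parallelism lets you pay $O(1)$ per \emph{hit}, but it does not let you skip over an unbounded number of \emph{misses} for free, and nothing in your scheme bounds the misses.

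The paper's fix is essentially the opposite of yours: instead of one uncolored cutting with colors packed inside cells, it builds a \emph{separate} $t$-shallow cutting $L_{ij}$ for every color class $S_{ij}$ (there are $O(w^{2\varepsilon})$ of them), with $t=\log w\cdot\log\log n$, and then \emph{groups the corners} of all these cuttings together by $x$-coordinate into blocks of size $w^{2\varepsilon}$, padding each block with one ``immediately-left'' corner from every class. Each block then owns only $O(w^{2\varepsilon}t)=O(w^{1/4})$ rectangles, so a single predecessor search plus one Lemma~\ref{lemma:leaf} query suffices, and a counting argument shows the blocks sum to $O(n)$ space. This handles $k<t$; the case $k\ge t$ is dispatched by a separate $O(\log w\cdot\log\log n+k)$ structure (a height-$O(\log w)$ binary interval tree on the color coordinate with ordinary $3$-d dominance at the nodes). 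You would need both of these ingredients, or something equivalent, to close the argument.
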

The complete discussion on $6$-sided rectangle stabbing 
is provided in Section F of the appendix. Here we will only highlight the key result.

\begin{lemma}
\label{lemma:rest4sid1-main}
  There exists a linear-space data structure that answers $z$-restricted 3-d $4$-sided rectangle 
  stabbing queries in $O(\log w\cdot \log\log n+k)$ time. A $z$-restricted $4$-sided rectangle 
  is of the form $(-\infty,x] \times (-\infty,y] \times [i,j]$,  where integers $i,j\in \left[ w^\eps \right]$ 
  and $\eps=0.1$.
\end{lemma}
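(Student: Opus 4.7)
The plan is to combine an interval tree on the $z$-coordinate with a linear-space 3-d dominance reporting structure at each node, thereby reducing the $z$-interval-stabbing condition at each tree node to a single inequality.

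First I would build an interval tree $T$ over the universe $[w^\varepsilon]$ for the $z$-axis. Since $|[w^\varepsilon]|=w^\varepsilon$, the tree has depth $O(\log w^\varepsilon)=O(\log w)$, and every internal node $v$ carries a center value $c_v$ at the midpoint of its range. Following the standard interval-tree invariant, I assign each rectangle $R$ with $z$-interval $[i_R,j_R]$ to the highest node $v$ whose center $c_v$ lies inside $[i_R,j_R]$. Thus every rectangle is stored at exactly one node, and every rectangle stored at $v$ satisfies $i_R\le c_v\le j_R$.

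At each node $v$ I would then build two 3-d dominance reporting structures on the $m_v$ rectangles stored there: $D_v^L$ on the points $(-i_R,x_R,y_R)$ to handle queries with $q_z\le c_v$, and $D_v^R$ on the points $(j_R,x_R,y_R)$ to handle queries with $q_z>c_v$. The key observation is that if $q_z\le c_v$, then every rectangle at $v$ automatically satisfies $j_R\ge c_v\ge q_z$, so the condition ``$R$ stabbed by $(q_x,q_y,q_z)$'' collapses to the three dominance inequalities $i_R\le q_z$, $x_R\ge q_x$, $y_R\ge q_y$, i.e.\ exactly a 3-d dominance query on $D_v^L$; the case $q_z>c_v$ is symmetric. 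Each such structure uses linear space and supports $O(\log\log_w n+k)$-time reporting in the word RAM model via the linear-space 3-d dominance result obtained by combining the works of Afshani and Chan~\cite{c13} (this is the same structure used for 3-d 3-sided rectangle stabbing in the preliminaries).

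The query algorithm for $(q_x,q_y,q_z)$ traces the root-to-leaf path of $q_z$ in $T$, which has $O(\log w)$ nodes, and issues one 3-d dominance query per node on either $D_v^L$ or $D_v^R$, outputting the union. Correctness follows because any rectangle whose $z$-interval contains $q_z$ must be stored at one of the ancestors of the leaf of $q_z$, and the associated dominance query catches it exactly when the $xy$-dominance conditions hold. The space is $O(n)$ words since each rectangle is stored at a unique node; the query time is $O(\log w)\cdot O(\log\log_w n+k_v)=O(\log w\cdot\log\log n+k)$. The main obstacle, or rather the key insight, is to use an interval tree rather than a segment tree: a segment tree on $[w^\varepsilon]$ would duplicate each rectangle across $O(\log w)$ canonical nodes and blow up space by a $\log w$ factor, whereas the interval tree both stores each rectangle once and, by exploiting the $i_R\le c_v\le j_R$ invariant, transforms the 4-sided stabbing problem into clean 3-d dominance subproblems at each node on the search path.
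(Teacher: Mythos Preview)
Your proposal is correct and essentially identical to the paper's own proof: a binary interval tree on the $z$-universe $[w^\eps]$ of height $O(\log w)$, each rectangle stored at the unique node whose center lies in its $z$-interval, two 3-d dominance reporting structures per node, and a root-to-leaf query path yielding $O(\log w\cdot\log\log n + k)$ time with linear space. Your write-up is in fact more explicit than the paper's (which defers the node-level reduction to \cite{r15}), spelling out the $D_v^L/D_v^R$ construction and why the center invariant collapses one $z$-inequality.
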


\begin{lemma}
\label{lemma:restric4sid2-main}
  There exists an optimal linear-space data structure that answers $z$-restricted 3-d 4-sided rectangle stabbing
   queries in $O(\log\log_w n+k)$ time. A $z$-restricted $4$-sided rectangle 
  is of the form $(-\infty,x] \times (-\infty,y] \times [i,j]$,  where integers $i,j\in \left[ w^\eps \right]$ 
  and $\eps=0.1$.
\end{lemma}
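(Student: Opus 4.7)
The plan is to improve on Lemma~\ref{lemma:rest4sid1-main} by (i) replacing the generic 2-d dominance subroutine with the optimal $O(\log\log_w n+k)$ shallow-cutting version, and (ii) eliminating the $O(\log w)$ factor by exploiting the tiny $z$-universe via bit-packing, rather than by a segment tree over $[w^\eps]$. Each $z$-restricted 4-sided rectangle $(-\infty,x]\times(-\infty,y]\times[i,j]$ is re-interpreted as a point $(x,y)$ carrying a label $(i,j)\in[w^\eps]^2$. There are only $m=O(w^{2\eps})$ distinct labels, and since $m\log w = o(w)$ with $\eps=0.1$, an $m$-bit vector indexed by $(i,j)$ fits in $O(1)$ words. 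For each possible query value $z_q\in[w^\eps]$, I would precompute an $m$-bit mask $N_{z_q}$ with a $1$ in position $(i,j)$ iff $i\le z_q\le j$; this table takes $O(w^\eps)$ words.

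Next, I would build the Afshani--Chan style 2-d dominance shallow cutting on the projected points $\{(x,y)\}$, parameterized so each cell $c$ has a conflict list of $O(w\log\log_w n/\log w)$ candidates. To each candidate I would attach its label $(i,j)$ packed in $O(\log w)$ bits and its $(x,y)$ packed in $O(\log n)$ bits, plus a pointer to the fully decoded record. A query $(x_q,y_q,z_q)$ first locates cell $c$ in $O(\log\log_w n)$ time via the standard navigation table; then it filters the entries of the conflict list in $O(1)$ time per packed word by combining three bit-parallel comparisons ($x\ge x_q$, $y\ge y_q$) and an AND against $N_{z_q}$ indexed through a small precomputed table. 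Because there are $O(\log\log_w n)$ packed words per cell, the scan is $O(\log\log_w n)$, and we then spend $O(1)$ per reported rectangle to decode the original coordinates via the leaf-level pointers, for a total of $O(\log\log_w n+k)$. Output-sensitivity for $k$ larger than the cell size is handled by the usual doubling hierarchy of cuttings at levels $2^\ell$; telescoping across levels keeps the space in $O(n)$ words because the per-cell augmentation by a label is only a constant fraction of the $O(\log n)$ bits already spent on each point.

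The grouping trick, which may be of independent interest, is that at each cell we also keep a single $m$-bit summary mask $M_c$ listing which labels occur at all, so that $M_c\wedge N_{z_q}$ in $O(1)$ time produces the set of \emph{active} labels for the query; iterating over the set bits via a popcount-and-deposit primitive then visits only labels that could contribute. Coupled with a per-label staircase inside the conflict list that supports $O(1)$ ``is there any dominating point of this label?'' tests, this avoids the potential $O(m)$ overhead when $k$ is small, ensuring the total query work is truly $O(\log\log_w n+k)$ and not $O(m+k)$.

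The main obstacle I expect is engineering the per-cell representation so that all three obligations are met simultaneously: bit-parallel comparison against $(x_q,y_q)$, AND-filtering against $N_{z_q}$, and $O(1)$-time iteration over active labels, all while keeping each cell's storage within the $O(|L_c|\log n/w)$ words that the shallow-cutting space budget allows. Once the bit-layout is fixed, verifying linear space reduces to summing the augmented conflict-list size across the hierarchy, and verifying optimal query time reduces to bounding the scan and decoding costs as above; the lower bound of $\Omega(\log\log_w n+k)$ for 2-d dominance applies verbatim to this problem, so the resulting bound is optimal.
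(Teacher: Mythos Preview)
Your approach has a genuine gap. You build a \emph{single} 2-d shallow cutting on the projected points $\{(x,y)\}$, ignoring the labels $(i,j)$, and then filter by label afterwards with bit masks. The problem is that containment of the query point $(q_x,q_y)$ in a cell of a $t$-shallow cutting is governed by the \emph{total} number $K$ of points dominating $(q_x,q_y)$, over all labels, whereas the output size $k$ counts only points whose label $(i,j)$ satisfies $i\le z_q\le j$. These two quantities are unrelated: take all $n$ rectangles with label $(1,1)$ and query with $z_q=2$; then $k=0$, yet $K$ can be $\Theta(n)$. Your ``doubling hierarchy'' does not help here, since you must climb to a level $\Theta(K)$ before the query point is covered at all, and scanning that conflict list costs $\Theta(K\log n/w)$ even with perfect bit packing. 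Your per-cell summary mask $M_c$ and per-label staircases live \emph{inside} a cell and therefore presuppose that the query lands in some cell; they do nothing when $K\gg t$.

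The paper avoids this by building a separate $t$-shallow cutting $L_{ij}$ for each of the $O(w^{2\eps})$ label classes $S_{ij}$, with $t=\log w\cdot\log\log n$. Because the number of points of any single class dominating the query is at most $k<t$, the query point is covered in every relevant $L_{ij}$. The difficulty then becomes querying $O(w^{2\eps})$ cuttings in $O(\log\log_w n)$ total time, and that is what the paper's grouping trick actually does: it merges the corner sets $\cC_{ij}$ of all cuttings, sorts by $x$, partitions into groups of $w^{2\eps}$ consecutive corners, pads each group with one ``immediately left'' corner from every $\cC_{ij}$, and precomputes the union $R_\alpha$ of the dominance lists for each group (size $O(w^{2\eps}t)=O(w^{1/6})$, so Lemma~\ref{lemma:leaf} applies). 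A single 1-d predecessor search on $q_x$ locates the right group, after which the query is $O(1+k)$. Your ``grouping trick'' is a different idea (a bitmask of labels present in a cell) and does not address the core obstacle.
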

\begin{proof}
 We can safely assume that $n > w^{2\eps}\cdot\log w\log\log n$, 
 because the case of $n< w^{2\eps}\cdot\log w\log\log n=O(w^{1/4})$ can be handled in 
 $O(1+k)$ time by using the structure of Lemma~\ref{lemma:leaf}.
 To keep the discussion short, we will assume that  $k<\log w\cdot\log\log n$ 
 (handling small values of $k$ is typically more challenging).

\subparagraph{Shallow cuttings.} A point $p_1$ is said to {\em dominate} 
point $p_2$ if it has a larger $x$-coordinate and a larger $y$-coordinate value. 
Our main tool to handle this case are {\em shallow cuttings} which have the following 
three properties: (a) A  $t$-shallow cutting for a set $P$ of 2-d  points 
is  a union of $O(n/t)$ cells where every cell is of the form $[a,+\infty)\times [b,+\infty)$, 
(b) every point that is dominated by at most $t$ points from $P$ will lie within some cell(s), and 
(c) each cell contains at most $O(t)$ points of $P$.
A cell $[a,+\infty)\times [b,+\infty)$ can be identified by its corner $(a,b)$. 
We denote by $\textit{Dom}(c)$ the set of points that dominate the corner $c$.

\subparagraph{Data structure.}  
We classify rectangles according to their $z$-projections. The set $S_{ij}$ contains 
  all rectangles of the form $r=(-\infty,x_f]\times (-\infty,y_f]\times [i,j]$. 
  Since $1\le i\le j\le w^{\eps}$, there are $O(w^{2\eps})$ sets $S_{ij}$.   
  Every rectangle $r$ in $S_{ij}$ is associated with a point $p(r)=(x_f,y_f)$. 
  We construct a $t$-shallow cutting $L_{ij}$ with $t=\log w\cdot\log\log n$ 
  for the set of points $p(r)$, such that $r\in S_{ij}$.  
  A rectangle $r=(-\infty, x_f]\times (-\infty, y_f]\times [i,j]$ 
  is stabbed by a query point $q=(q_x,q_y,q_z)$ if and only if  
  $p(r)\in S_{ij}$  and the point $p(r)$ dominates the 2-d point $(q_x,q_y)$. 
We can find points of a set $S_{ij}$ that dominate $q$ using the shallow cutting $L_{ij}$. 
However, to answer the stabbing query we must simultaneously answer   
a dominance query on $O(w^{2\eps})$ different sets of points.

We address this problem by {\em grouping} corners of different shallow cuttings into one structure. 
Let $\cC_{ij}$ denote the set of corners in a shallow cutting $L_{ij}$ and let 
$\cC=\bigcup_{\forall i,j \in [w^{\eps}]} \cC_{ij}$. The set $\cC$ is divided into disjoint groups, 
so that every group $G_{\alpha}$ consists of $w^{2\eps}$ consecutive corners 
(with respect to their $x$-coordinates): for any $c\in G_{\alpha}$ and $c'\in G_{\alpha+1}$, 
$c.x< c'.x$. We say that a corner $c\in \cC_{ij}$ is immediately to the left of 
$G_{\alpha}$ if it is the rightmost corner in $\cC_{ij}$ such that $c_x\le c'_x$ for any corner $c'=(c'_x,c'_y)$ in $G_{\alpha}$.  The set of corners $\oG_{\alpha}$ contains (1) all corners from $G_{\alpha}$, and  
(2) for every pair $i,j$ such that $1\le i\le j\le w^{\eps}$, the corner $c\in \cC_{ij}$ immediately to the left of $G_{\alpha}$. The set $R_{\alpha}$ contains all rectangles $r$ such that $p(r)\in \textit{Dom}(c)$ for each corner 
$c\in \oG_{\alpha}$.  Since $R_{\alpha}$ contains $O(w^{2\eps}\log w\cdot\log\log n)=O(w^{1/6})$ rectangles, we can perform a rank-space reduction and answer queries on $R_{\alpha}$ in $O(k+1)$ time by using Lemma~\ref{lemma:leaf}; see Fig.~\ref{fig:corners}. 

\begin{figure}[h]
\includegraphics[width=.5\textwidth]{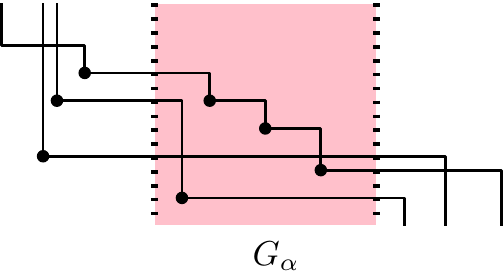}
\caption{The set of corners $G_{\alpha}$}
\label{fig:corners}
\end{figure}

Next, we will show that the space occupied by this structure is $O(n)$. The crucial observation is that 
the number of corners in $G_{\alpha}$ is $w^{2\eps}$ and 
the number of ``immediately left'' corners added to each $\oG_{\alpha}$ is also bounded by $w^{2\eps}$. 
The number of corners in set  $\cC$ is bounded by 
$\sum_{\forall L_{i,j}}O\left(1+\frac{|S_{ij}|}{t}\right)=O(n/t)$, 
since $n/t > w^{2\eps}$. Therefore, the number of groups will be $O\left(\frac{n}{tw^{2\eps}}\right)$. 
Each set $R_{\alpha}$ contains $O(w^{2\eps}t)$ rectangles. Therefore, the total space occupied by this 
structure is $\sum_{\forall \alpha} |R_{\alpha}|=O(\frac{n}{tw^{2\eps}}\cdot w^{2\eps}t)=O(n)$.

\subparagraph{Query algorithm.} 
Given a query point $q=(q_x,q_y,q_z)$, we find the set $G_{\alpha}$ 
that ``contains'' $q_x$. Then we report all the rectangles in $R_{\alpha}$ that are stabbed by $q$ 
by using Lemma~\ref{lemma:leaf}. We need $O(\log\log_w n)$ time to find the group $G_{\alpha}$~\cite{pt06} and, then $O(1+k)$ time to  report $R_{\alpha} \cap q$. 

Our procedure reports all rectangles stabbed by the query point: Suppose that a point $q$ stabs at most $t$ rectangles. Let $G_{\alpha}$ denote the group of corners that contains $q$. If $q$ stabs a rectangle $r$, then $p(r)$ dominates $q$. Hence $p(r)\in \textit{Dom}(c)$ for some corner $c\in \oG_{\alpha}$ and the rectangle $r$ is stored in the data structure $R_{\alpha}$. Now suppose that  $q$ stabs more than $t$ rectangles. Then, by the same token, there are at least 
$t$ rectangles in $R_{\alpha}$ that are stabbed by $q$. Hence we will use the slow data structure to answer the query and correctly report all rectangles in this case. 
\end{proof}


\bibliography{ref-pl}

\newpage


\section*{\Huge{Appendix}}

\appendix
\section{On the Models}\label{app:models}
Throughout this paper, the \emph{pointer machine model} refers to
the ``arithmetic pointer machine'' (APM) in the terminology
from Chazelle's paper~\cite{c88}: Each word (or memory cell) stores
a constant number of pointers, input points, and/or $w$-bit integers for a fixed $w$.  We support pointer chasing and standard arithmetic operations, comparisons, and shifts on $w$-bit integers in unit time each, but do not allow pointer arithmetic.  It is assumed that $w\ge\log n$ (which is reasonable since a pointer or input point requires $\Omega(\log n)$ bits).

In the \emph{I/O model}, each block is assumed to hold $B$ words for a fixed $B$,
where each word stores an input point or a $w$-bit integer, again assuming that $w\ge\log n$.  We support block reads/writes with unit cost each; all other operations on a block are free.

In the \emph{word RAM model}, each word stores a $w$-bit integer, again assuming that $w\ge \log n$; we support standard arithmetic operations, comparisons, bitwise logical operations, and shifts on $w$-bit integers in unit time each, and allow these $w$-bit integers to be used as pointers.  Furthermore, it is assumed that the coordinates of the input points are $w$-bit integers.

\section{Proof of Lemmata~\ref{lem:opl2d} and~\ref{lem:stab2d}}\label{app:missing-details}
For Lemma~\ref{lem:opl2d}, such data structures for
2-d orthogonal point location can be found in \cite{lt80,k83,egs86,st86,snoeyink} for the pointer machine model,  \cite{gtvv93,adt03} for the I/O model, and \cite{c13} for the word RAM model.  For Lemma~\ref{lem:stab2d},  2-d rectangle stabbing emptiness (or more generally, rectangle stabbing counting) is known to be reducible to 
2-d orthogonal range counting~\cite{eo82}, and
such data structures for 2-d orthogonal range counting can be found in~\cite{c88} for the pointer machine model, \cite{gaa03} for the I/O model, and \cite{jms04} for the word RAM model.

All these known data structures technically require $O(n)$ words of space, or more precisely, $O(n\log U)$ bits of space.  In the I/O model or word RAM model, we can easily pack the data structures in $O\left(\frac{n\log U}{w}\right)$ words of space without increasing the query cost when $\log U\ll w$.  In the pointer machine model, we may not be able to pack the data structures in general, since
if multiple ``micro-pointers'' are packed in a word, the model does not allow us to follow such a micro-pointer.  Nevertheless, it is not difficult to modify the existing data structures to achieve the compressed space bound; next, we present the technical details of the modifications needed.

\subparagraph*{Proof of Lemma~\ref{lem:opl2d} for pointer machines.}

For 2-d orthogonal point location, one solution is via \emph{$(1/r)$-cuttings}~\cite{gor97}:
we can partition the plane into $O(r)$ disjoint rectangular cells, each intersecting $O(n/r)$ line segments (edges of the input rectangles), where 
we choose $r=\frac{\delta n\log U}{w}$ for a sufficiently small constant $\delta$. 

We build a point location structure~\cite{lt80,k83,egs86,st86,snoeyink} for the $O(r)$ cells with  $O(\log r)$ query time in the pointer machine model; the space usage of this structure in words is $O(r)$, which is within the allowed bound $O\left(\frac{n\log U}{w}\right)$, so there is no need for bit packing here.  

For each cell, we store the $O(n/r)$ line segments in a point location structure~\cite{k83}  with 
$O\left(\log(n/r)\right)$ query time; the space usage of this structure in bits is
$O\left((n/r)\log U\right)$, which is $O\left(\delta w\right)$, so the entire structure can be packed in a single word.
Although pointer chasing is not directly supported in the pointer machine model when multiple ``micro-pointers'' are packed in a word, we can simulate each pointer chasing step here in constant time by arithmetic operations and shifts within the word.

Given a query point $q$, we can first find the cell containing $q$ in $O\left(\log r\right)$ time and then finish the query inside the cell in $O\left(\log(n/r)\right)$ time.
The overall query time is $O\left(\log n\right)$.

\subparagraph*{Proof of Lemma~\ref{lem:stab2d} for pointer machines.}

Rectangle stabbing emptiness in 2-d reduces to dominance range counting in 2-d~\cite{eo82}.
Chazelle's \emph{compressed range tree} structure~\cite{c88} solves the latter problem with $O(n)$ words of space and $O(\log n)$ time in the pointer machine model.  We observe that his data structure actually
achieves $O\left(\frac{n\log U}{w}\right)$ words of space, after minor modifications.

At each level of the range tree, Chazelle's structure stores lists consisting of a total of $O\left(\frac{n}{w}\right)$ words ($O\left(\frac{n}{w}\right)$ $w$-bit integers as well as $O\left(\frac{n}{w}\right)$ pointers to words in lists at the next level).  The total number of words over all levels of the tree is $O\left(\frac{n\log n}{w}\right)\le O\left(\frac{n\log U}{w}\right)$.

We shorten the tree by making the leaf nodes contain $b$ points, where we choose $b = \frac{\delta w}{\log U}$ for a sufficiently small constant $\delta$.  This way, the space in words for the tree itself is
$O\left(n/b\right) = O\left(\frac{n\log U}{w}\right)$.
Inside each leaf, we store the $b$ points in another instance of
Chazelle's structure; the space usage of this structure in bits is
$O\left(b\log U\right)$, which is $O\left(\delta w\right)$, so the entire structure can be packed in a single word.
Again, we can simulate each pointer chasing step here in constant time by arithmetic operations and shifts within the word.

To answer a dominance range counting query, we descend along a path
in the compressed range tree, which requires $O(\log (n/b))$ time by following pointers in the lists stored at the path and doing various arithmetic operations and shifts on $w$-bit integers.  At the leaf of the path, we can finish the query in $O\left(\log b\right)$ time.
The overall query time is $O\left(\log n\right)$.

\section{Other Models}\label{app:other-models}

In the I/O model, the analysis is similar, with a modified recurrence for the query cost:
\[
Q\left( U, U, U\right) = Q\left( \sqrt{U}, \sqrt{U}, \sqrt{U}\right) + O\left( \log_B U \right).
\]
For the base case $U\le B^{1/3}$, we have $Q\left( U, U, U\right)=O(1)$ trivially, since $n\le U^3\le B$.
Solving the recurrence yields $O\left( \log_B N\right)$ query cost.
The space usage remains $O\left(N\right)$ words (i.e., $O\left( N/B \right)$ blocks).

In the word RAM model, the analysis is again similar, with
\[
Q\left( U, U, U\right) = Q\left( \sqrt{U}, \sqrt{U}, \sqrt{U}\right) + O\left( \log_w U\right).
\]
For the base case $U\le w$, we have $Q\left( U,U,U\right) = O\left(1\right)$ by switching to another known method:
Orthogonal point location in 3-d reduces to 6-d dominance emptiness, for which there is a known
method~\cite{cz15} with $O\left( n(\log_w n)^4\right)$ words of space
and $O\left( (\log_w n)^5\right)$ query time in the word RAM\@.
(The method in~\cite{cz15} can be modified to report a witness if the
range is non-empty.)
Since $n\le U^3\le w^3$, we have $\log_w n = O(1)$, and so the space bound is $O\left( n\right)$ and
query bound is $O\left( 1\right)$ for the base case.
Solving the recurrence yields $O\left( \log_w N\right)$ query time.

\section{Final Remarks}\label{app:higher-subdvision}

\subparagraph{Orthogonal point location in 4-d.} 
In the word RAM model, if we plug in the bounds  for 3-d 6-sided rectangle stabbing 
(Theorem~\ref{theor:stab6sid}) and the bounds for 3-d orthogonal point location (Theorem~\ref{thm:opl3d}) 
into our framework, then we obtain a linear-space structure which can 
answer a  4-d orthogonal point location query in $O(\log_w^2n)$ time.

\subparagraph*{Higher dimensions.}
The same approach can be extended to higher dimensions, reducing the
complexity of $d$-dimensional orthogonal point location to that
of $(d-1)$-dimensional box stabbing emptiness.  However, known data structures for higher-dimensional box stabbing~\cite{aal12} requires superlinear
space, whereas
the simpler approach mentioned in the Introduction, 
of using interval trees to reduce the dimension, gives $O\left(\log^{d-2}n\right)$
query time while keeping linear space in the pointer machine model.

\subparagraph*{The case of 3-d subdivisions.}
Our approach can also be used to improve the space bound of
de Berg, van Kreveld, and Snoeyink's point location structure~\cite{bks95} for 3-d orthogonal subdivisions, from $O\left(n \log\log U\right)$ space to $O\left( n\right)$, in the word RAM model.

\begin{theorem}\label{thm:sub}
Given a subdivision formed by $n$ disjoint (space-filling) axis-aligned boxes in 3-d,
there is a data structure for point location with
$O\left(n\right)$ words of space and $O\left( \log^2 \log U \right)$ query time in the word RAM model.
\end{theorem}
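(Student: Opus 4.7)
The plan is to apply the same bit-packing / space-reduction trick from Section~\ref{sec:opl3d} to the 3-d orthogonal subdivision point location structure of de Berg, van Kreveld, and Snoeyink~\cite{bks95} (with Chan's word RAM improvement~\cite{c13}). Their structure already achieves $O(\log^2\log U)$ query time, but uses $O(n\log\log U)$ words of space because each input box is charged to $\Theta(\log\log U)$ levels of the recursion.

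First, I would recast their structure in the round-robin van-Emde-Boas framework of Section~\ref{sec:opl3d}: at each level, partition the current $[U_x]\times[U_y]\times[U_z]$ grid along the largest dimension into $\sqrt{U_x}$ slabs, decompose each input box into short/middle/long pieces with respect to the slabs, and recurse on short pieces within each slab and on middle pieces globally. The crucial difference from the disjoint-boxes case is that we can exploit the subdivision property: the cross-section at each slab boundary is itself a 2-d subdivision, so at each recursion node we can use Chan's word RAM 2-d \emph{subdivision} point location structure from Lemma~\ref{lem:opl2d} (giving $O(\log\log U)$ query time), rather than a 2-d rectangle stabbing emptiness structure composed with a 2-d point location structure. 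The query-time recurrence then becomes
\[ Q(U,U,U) = Q(\sqrt{U},\sqrt{U},\sqrt{U}) + O(\log\log U), \]
which solves to $O(\log^2\log U)$, matching the target bound.

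Next, I would bit-pack every auxiliary 2-d structure using the compressed variants from Lemma~\ref{lem:opl2d}, which occupy $O(n\log U / w)$ words. Because the universe shrinks rapidly under the round-robin partition, the amortized words of space per input box satisfy
\[ s(U,U,U) = s(\sqrt{U},\sqrt{U},\sqrt{U}) + O(\log U / w), \]
which solves to $s = O(\log U/w) = O(1)$, yielding total space $O(n)$ words (after shortcutting degree-$1$ nodes of the recursion tree to bound node overhead).

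The main obstacle I anticipate is carefully arranging the short/middle/long decomposition so that the subdivision property is preserved at every level of recursion, and verifying that a single 2-d subdivision point location query at the appropriate slab boundary suffices to disambiguate between descending into the slab's short substructure versus jumping into the global middle substructure (the analogue of the rectangle stabbing emptiness test of Section~\ref{sec:opl3d}, now cleaner because the subdivision forces a unique box at the boundary). Once this bookkeeping is pinned down, the space and query analyses are routine calculations parallel to those in Section~\ref{sec:opl3d}.
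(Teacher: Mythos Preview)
Your proposal is essentially the same approach as the paper's: apply the round-robin van-Emde-Boas recursion and bit-packing of Section~\ref{sec:opl3d} to de Berg, van Kreveld, and Snoeyink's subdivision structure, replacing the rectangle-stabbing test by a 2-d subdivision point location query (cost $O(\log\log U)$), yielding the recurrences $Q(U,U,U)=Q(\sqrt{U},\sqrt{U},\sqrt{U})+O(\log\log U)$ and $s(U,U,U)=s(\sqrt{U},\sqrt{U},\sqrt{U})+O(\log U/w)$.

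One small point of divergence worth noting: you retain Section~\ref{sec:opl3d}'s separate \emph{middle} recursive structure and plan to use point location only to ``disambiguate'' between short and middle. The paper instead follows de Berg et al.\ more literally: at each slab, the middle boxes (those spanning the slab) have disjoint $yz$-projections, and the complement (the ``holes'') is completed into a 2-d subdivision via vertical decomposition, whose size can be charged to input vertices in the slab. A single 2-d point location in this completed subdivision either returns the containing middle box outright or certifies that the query lies in a hole, in which case one recurses only on the short side---so there is no middle recursion at all. This sidesteps exactly the obstacle you flag (the middle boxes by themselves do not form a subdivision). Your boundary-cross-section idea can be made to work too, but the hole-filling formulation is what makes the ``subdivision property preserved at every level'' bookkeeping routine.
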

\begin{proof} (Sketch)\ \
De Berg et al.'s method~\cite[Theorem 2.4]{bks95} was already based on a van Emde Boas recursion, partitioning along the $x$-direction.  They also used 2-d orthogonal point location structures during the recursion, but managed to avoid rectangle stabbing structures by exploiting the fact that the input is a
subdivision.
Roughly, for each slab, they took the ``holes'' formed by all middle boxes that intersect the slab, and filled the holes by taking the vertical decomposition of the $yz$-projection.  The analysis followed by charging the complexity of the decomposition to vertices within the slab.

Our new change is to do the van Emde Boas recursion not just along the $x$-direction but along all three axis directions in a round-robin fashion.  This leads to the same recurrence for space as in Section~\ref{sec:opl3d}.  The query time satisfies the following recurrence:
\[
Q\left( U, U, U\right) = Q\left( \sqrt{U}, \sqrt{U}, \sqrt{U}\right) + O\left( \log\log U\right).
\]
This leads to $O(\log^2\log U)$ query time.
\end{proof}

\section{Top-$k$ 2-d rectangle stabbing}\label{app:top-k}
Our solution for $5$-sided rectangle stabbing can be modified to support top-$k$ stabbing queries in optimal time. We use the same general approach, but we need additional ideas to handle the {\em top-$k$}
aspect of the problem.

\subsection{Top-$k$ 2-d dominance query}

First, we will present an optimal solution for
the top-$k$ 2-d dominance query where the input is a set $P$ of $n$ weighted points in 2-d, and the query is
an integer $k$ and a dominance range $q=[q_x,\infty) \times [q_y,\infty)$. The result obtained
is the following.
\begin{theorem}
  \label{theor:topkdomin}
There exists a linear-space data structure that answers the top-$k$ 2-d dominance query in $O(\log\log_w n + k)$ time. This is optimal in the word RAM model.
\end{theorem}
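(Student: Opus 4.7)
The plan is to reduce top-$k$ 2-d dominance to a 3-d dominance-style problem by promoting the weight to a third coordinate: each weighted point $p=(p_x,p_y,w_p)$ is treated as the 3-d point $(p_x,p_y,w_p)$, and the query at $(q_x,q_y,k)$ asks for the $k$ points of largest $z$-coordinate in the 2-d dominance region $[q_x,\infty)\times[q_y,\infty)$. After rank-space reduction all coordinates and weights lie in $[n]$, so the data fits the word-RAM framework used in Lemma~\ref{lemma:restric4sid2-main}, and I plan to adapt the shallow-cutting-with-grouping construction of that proof to the top-$k$ setting.

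The construction will build 2-d dominance shallow cuttings at a geometric sequence of levels $t=2^i$. At each cell $C$ of each cutting (a dominance region $[a,\infty)\times[b,\infty)$ containing $O(t)$ input points), the points of $C$ will be stored sorted in decreasing order of weight. Following the strategy of Lemma~\ref{lemma:restric4sid2-main}, I merge the corners across all relevant levels into groups of $w^{2\eps}$ consecutive corners by $x$-coordinate, augment each group with the ``immediately-left'' corner from every cutting it misses, and consolidate the $O(w^{1/4})$ relevant points into a single leaf structure per group. The leaf structure is a top-$k$ variant of Lemma~\ref{lemma:leaf}: using bit-parallel dominance tests on bit-packed, weight-sorted representations, it answers top-$k$ 2-d dominance queries on its $O(w^{1/4})$ points in $O(1+k)$ time. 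Summing the corner counts over all levels yields $O(n/w^{2\eps})$ groups, each occupying $O(w^{1/4})$ words, so the total space is $O(n)$.

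For a query $(q_x,q_y,k)$ I will identify the level $t=\Theta(k)$ and locate, via the predecessor structure of~\cite{pt06} in $O(\log\log_w n)$ time, the group at that level whose $x$-range contains $q_x$; I then query the group's leaf for the $k$ heaviest 2-d dominators of $q$ in $O(1+k)$ time. The main obstacle is twofold: first, correctness in the regime where the number of 2-d dominators of $q$ exceeds $t$ so that $q$ lies in no cell of the level-$t$ cutting, which should be handled by the ``immediately-left'' augmentation exactly as in Lemma~\ref{lemma:restric4sid2-main}; and second, keeping the grouping across all relevant levels within linear total space, which requires the telescoping $\sum_i O(n/t_i)=O(n)$ bound on corner counts together with careful sharing of points between levels via the bit-packed representation. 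Once these two points are established, the matching $\Omega(\log\log_w n+k)$ lower bound for 2-d dominance reporting~\cite{pt06} yields optimality in the word-RAM.
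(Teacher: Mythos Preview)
Your plan has a genuine correctness gap. You build \emph{two-dimensional} shallow cuttings at level $t=\Theta(k)$ and hope that the ``immediately-left'' augmentation from Lemma~\ref{lemma:restric4sid2-main} will cover the case where $(q_x,q_y)$ has more than $t$ dominators in the plane. It does not. In Lemma~\ref{lemma:restric4sid2-main} the assumption $k<t$ guarantees that $q$ \emph{does} lie in some cell of every $L_{ij}$; the augmentation merely ensures that a covering cell is represented in the current group $\oG_{\alpha}$. In your setting the parameter $k$ is unrelated to the number of planar dominators of $q$: a query with $k=5$ can have $n/2$ planar dominators, and then $q$ lies in \emph{no} cell of the level-$\Theta(k)$ 2-d cutting at all, so no corner---immediately left or otherwise---can supply the answer. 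The right level is governed by the number of planar dominators, not by $k$, and that quantity is unknown at query time; this is precisely why the paper lifts the weight to a third coordinate and uses \emph{three-dimensional} shallow cuttings, where the query point $q'=(q_x,q_y,q_z)$ is chosen so that exactly $k$ points 3-d--dominate it, guaranteeing coverage at level $\Theta(k)$.

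Your space argument is also broken. You merge corners from all levels $t=2^i$ into a single $x$-sorted list, group them in blocks of $w^{2\eps}$, and then add to each group the immediately-left corner from \emph{every} level. A single such augmentation corner from level $t=\Theta(n)$ carries a conflict list of $\Theta(n)$ points, so the claim that ``each group occupies $O(w^{1/4})$ words'' fails. Even if you group per level, the total over $\Theta(\log n)$ levels is $\Theta(n\log n)$. The paper avoids this by using only \emph{two} 3-d shallow cuttings, at levels $\log n$ and $\log^{1/3}n$, together with a slow $O(\log n+k)$-time structure of Patil~\emph{et al.}\ for $k\ge\log n$ and a precomputed table (Lemma~\ref{lemma:topksmalldomin}) on each $O(\log^{1/3}n)$-size conflict list; a FIND-ANY query on the appropriate cutting via 2-d point location~\cite{c13} then gives the $O(\log\log_w n)$ term.
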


\subparagraph{Preliminaries.} The optimality of Theorem~\ref{theor:topkdomin} follows from the lower bound of Patrascu and Thorup for the predecessor search problem~\cite{pt06}. We will need the following two building blocks for our solution.
\begin{lemma} (Patil {\em et al.}\cite{pts+14}, Theorem 9)
  \label{lemma:topkdominslow}
There exists a linear-space data structure that  answers the top-$k$ 2-d dominance query in $O(\log n + k)$ time. The points are reported in a sorted order.
\end{lemma}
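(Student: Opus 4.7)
The plan is to prove the lemma by combining a priority search tree (PST) with partial persistence indexed by the $y$-coordinate. I would first sort the $n$ weighted input points by $y$-coordinate in decreasing order and insert them one at a time into a PST whose search-tree key is the $x$-coordinate and whose max-heap priority is the weight. Each node stores a single input point. Making this PST partially persistent records, for every prefix of the insertion sequence, a separate version of the tree; version $i$ contains exactly the $i$ points of largest $y$-coordinate. Using the node-copying method of Driscoll, Sarnak, Sleator, and Tarjan together with a balanced PST whose insertions perform only an amortized constant number of structural modifications, the entire persistent structure occupies $O(n)$ words.

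For a query $([q_x,\infty)\times[q_y,\infty),k)$, I would first locate in $O(\log n)$ time the smallest version $i^*$ whose threshold satisfies $y_{i^*}\geq q_y$, via a predecessor search over the $n$ stored $y$-thresholds. Inside version $i^*$ every stored point automatically satisfies $y\geq q_y$, so only the $x$-constraint remains. I would then descend the PST in $O(\log n)$ additional time to extract the $O(\log n)$ canonical subtrees whose $x$-ranges tile $[q_x,\infty)$; by the max-heap property each such subtree is itself a max-heap on weight. Seeding a global max-priority queue with the $O(\log n)$ roots of these subtrees and repeatedly extracting the heaviest candidate --- reporting it and pushing its two PST children back into the priority queue --- produces the top-$k$ points in decreasing order of weight. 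Because no $y$-filtering is ever needed inside version $i^*$, each extraction yields exactly one reported point in amortized $O(1)$ time, giving the claimed $O(\log n+k)$ query bound with sorted output.

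The principal obstacle I expect is guaranteeing linear total space for the persistent structure, since naive path copying during $n$ insertions would consume $\Theta(n\log n)$ space. I would overcome this by using the DSST fat-node variant: each PST node reserves a constant-size log of modification records and is physically duplicated only after a bounded number of writes, so the amortized space per insertion drops to $O(1)$. A secondary technicality is controlling the amortized number of rotations per PST insertion; this can be handled by choosing an underlying balanced scheme (e.g.\ weight-balanced or red-black with amortized $O(1)$ rebalancing) that is compatible with the priority-search-tree invariant. Combining the $O(\log n)$ predecessor search on versions with the $O(\log n + k)$ multi-heap selection inside the correct version then yields the stated $O(\log n + k)$ time, linear-space data structure reporting points in sorted order.
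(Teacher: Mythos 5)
The paper itself offers no proof of this lemma; it cites Patil et al.~\cite{pts+14} directly, so your proposal must be judged as an independent construction. As written it has two genuine gaps, one in the space bound and one in the query time.

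\emph{Space.} You claim that making a priority search tree partially persistent with the node-copying technique of Driscoll et al.\ costs $O(1)$ amortized words per insertion, on the grounds that a balanced PST can be chosen so that insertions perform ``only an amortized constant number of structural modifications.'' That bound is true of the \emph{rebalancing} cost of certain balanced BSTs, but not of the PST update as a whole: the priority-maintenance step of a PST insertion (the tournament/sift along the search path) writes to $\Theta(\log n)$ nodes in the worst case, and it cannot be amortized away. In particular, if weights are anticorrelated with the $y$-sweep order, every new point has the largest priority seen so far, displaces the current root, and the displaced elements cascade all the way to a leaf, touching $\Theta(\log n)$ nodes; over $n$ insertions this is $\Theta(n\log n)$ modifications. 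With fat nodes or node copying this yields $\Theta(n\log n)$ words, not $O(n)$. Reaching linear space here requires a genuinely more compressed encoding (e.g.\ wavelet-tree-style or hive-graph-style representations), not just DSST persistence of a PST.

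\emph{Query time.} You seed a max-priority queue with the $O(\log n)$ canonical-subtree roots and repeatedly extract the maximum while pushing its two PST children, asserting that ``each extraction yields exactly one reported point in amortized $O(1)$ time.'' That is not correct: after $k$ rounds the queue holds $\Theta(\log n + k)$ candidates, so each extract-max costs $\Theta(\log(\log n + k))$, for a total of $\Theta(\log n + k\log k)$, not $O(\log n + k)$. Because the lemma additionally demands \emph{sorted} output, you also cannot fall back on Frederickson-style heap selection, which returns the top-$k$ as an unsorted set; sorting it afterwards reintroduces the $\Theta(k\log k)$ term. Obtaining sorted top-$k$ in $O(\log n+k)$ total time requires precomputed sorted structure over canonical sets and an online merging scheme, which your construction does not supply. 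Both of these obstacles would have to be resolved before the argument establishes the stated bounds.

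(One small wording slip: you want the \emph{largest} version $i^*$ with $y$-threshold at least $q_y$, i.e.\ the version containing exactly the points with $y$-coordinate $\ge q_y$, not the smallest; this is cosmetic and easily fixed.)
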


\begin{lemma}
  \label{lemma:topksmalldomin}
We can keep $m=\log^{1/3}n$ points in a data structure of size $O(m)$-words   that
 answers the top-$k$ 2-d dominance query
 in $O(1+k)$ time.
\end{lemma}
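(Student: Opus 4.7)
The approach is to exploit word-level parallelism given the tiny size $m = \log^{1/3} n$. Since $m \cdot \lceil \log m \rceil = O(\log^{1/3} n \cdot \log\log n) = o(w)$, the rank-reduced coordinates of all $m$ points fit comfortably in a single machine word, which will let us perform every interesting comparison in parallel in $O(1)$ time.

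\textbf{Preprocessing.} Sort the $m$ points in decreasing order of weight, obtaining $p_1,p_2,\ldots,p_m$. Apply a rank-space reduction on the $x$-coordinates and, independently, on the $y$-coordinates; each rank occupies $\lceil \log m \rceil + 1$ bits (leaving one spare ``test bit'' at the top of each field for the subtract trick). Pack the $x$-ranks of $p_1,\ldots,p_m$ into a single word $X$, with $p_i$'s rank in the $i$-th field, and likewise build a packed word $Y$ of $y$-ranks. Store the original $x,y$-coordinates and weight of each $p_i$ in an auxiliary array of $m$ words, indexed by the weight-sorted position $i$. Finally, build a Fredman--Willard fusion-tree structure on the $m$ original $x$-coordinates and another on the $m$ original $y$-coordinates; since $m$ is polynomial in $w$, these take $O(m)$ space and support rank queries in $O(1)$ time.

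\textbf{Query.} Given $(q_x,q_y,k)$, first look up $r_x$, the rank of $q_x$ among the input $x$-coordinates, and $r_y$, the rank of $q_y$ among the input $y$-coordinates, each in $O(1)$ time using the fusion trees. Broadcast $r_x$ into every field of a word $R_x$ matching the layout of $X$, and apply the standard subtract-and-extract-top-bit trick to produce a bit-mask $M_x$ whose $i$-th test bit is $1$ iff $p_i$'s $x$-rank is $\geq r_x$; build $M_y$ analogously. Then $M = M_x \wedge M_y$ has its $i$-th test bit set precisely when $p_i$ dominates $(q_x,q_y)$. Because $p_1,p_2,\ldots$ are stored in decreasing weight order, enumerating the set test-bits of $M$ from least significant to most significant gives the dominating points in decreasing weight order; using the $t \wedge (-t)$ trick to isolate the lowest set bit and then clearing it, we extract up to $k$ such bits in $O(k)$ total time, and for each bit $i$ we retrieve the original coordinates and weight of $p_i$ from the auxiliary array in $O(1)$. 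The total query time is $O(1+k)$, and the total space is $O(m)$ words.

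\textbf{Main obstacle.} Every step on the packed words $X,Y,M$ is immediate once the packing layout is fixed, so the only nontrivial ingredient is computing the ranks $r_x,r_y$ of the (up to $w$-bit) query coordinates in $O(1)$ time while using only $O(m)$ space. Packed parallel comparison is too slow here because each original coordinate may be as wide as a full word. This is exactly what a single fusion-tree node (or a constant-depth fusion tree, since $m \leq w^{O(1)}$) is designed for, and it uses only $O(m)$ space; once this is in place, all remaining work is routine word-RAM bit manipulation.
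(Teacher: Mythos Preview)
Your argument is correct, but it takes a different route from the paper's proof. The paper uses a \emph{tabulation} strategy: after rank-space reduction there are only $O(m^2)$ combinatorially distinct query points, so for each one it precomputes and stores the first $\log\log n$ answers (packed into $O(m)$ words total, as in the proof of the leaf-structure lemma). For $k<\log\log n$ the query is a table lookup; for $k\ge\log\log n$ it falls back to the $O(\log m+k)=O(\log\log n+k)=O(k)$-time structure of Lemma~\ref{lemma:topkdominslow}.

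Your approach instead computes the full dominance bitmask on the fly via packed parallel comparisons and then peels off set bits in weight order. This is arguably cleaner: it handles all $k$ uniformly with no case split, does not rely on Lemma~\ref{lemma:topkdominslow} as a black box, and it explicitly addresses how to rank-reduce a full-width query coordinate in $O(1)$ time (via a constant-depth fusion tree on $m\le w^{1/3}$ keys), a step the paper's proof leaves implicit. The paper's version, on the other hand, reuses the same precompute-all-small-queries pattern that drives its leaf structures, so it integrates more uniformly with the rest of the development. Both arguments fit in $O(m)$ words and $O(1+k)$ query time.
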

\begin{proof}
  First, we reduce the problem to rank space.
  Next, we build an instance of Lemma~\ref{lemma:topkdominslow} on the rank-reduced dataset.
  Finally,  since there are only $O(m^2)$ combinatorially different queries we  store $\log\log n$
   highest weighted points that dominate each query point. As explained in the proof of Lemma~\ref{lemma:leaf} we can keep all pre-computed solutions in $O(m)$ words.

   To answer a query, the case of $k<\log\log n$ is handled by using the pre-computed solution, and
   the case of  $k> \log\log n$ is handled in $O(\log\log n + k)=O(k)$ time by querying the structure of Lemma~\ref{lemma:topkdominslow}.
\end{proof}

\subparagraph{Shallow cuttings.}
We will re-define shallow cuttings in the context of 3-d points.
A point $p_1$ is said to {\em dominate}
point $p_2$ if it has a larger coordinate value in all the three dimensions.
A $t$-shallow cutting for a set $P$ of 3-d points  is a collection of \emph{boxes}
of the form $[a,+\infty)\times [b,+\infty)\times [c,+\infty)$,
such that (a) there are only $O(n/t)$ boxes,
(b) every point that is dominated by at most $t$ points from $P$ will lie within some box,
and (c) each box contains at most $O(t)$ points of $P$.
A box $[a,+\infty)\times [b,+\infty)\times [c,\infty)$ can be identified by its corner $(a,b,c)$.

A common operation on a shallow cutting is
FIND-ANY: Given a 3-d point $q'$, find any box in the shallow cutting which contains $q'$.
The standard implementation of this query leads to a planar subdivision in the x-y plane
consisting of orthogonal rectangles where each rectangle is  labeled by a box in the
shallow cutting. Now given a  point $q'(q_x,q_y,q_z)$, we first perform a point location
query on the planar subdivision with $(q_x,q_y)$.

If
$B$ is the label on the rectangle and if $B$ contains $q$, then we report $B$;
otherwise, we can safely conclude that no box contains $q$.

\subparagraph{Data structure.}
Our structure consists of the following components:\\
(A) {\em Slow structure.} Based on the pointset $P$, we build the data structure of Lemma~\ref{lemma:topkdominslow}.\\
(B) {\em $\log n$-level shallow cutting.} We regard the weights of the points of $P$ as the third coordinate and  construct a $\log n$-shallow cutting $\cP_1$. \\
(C) {\em Slow structure for each box.} The {\em conflict list}, $CL_B$, of a box $B\in \cP_1$ is the points in  $P$ which lie inside it. For each $CL_B$ the data structure of Lemma~\ref{lemma:topkdominslow} is constructed.\\
(D) {\em $\log^{1/3}n$-level shallow cutting.} A $\log^{1/3}n$-shallow cutting $\cP_2$ is constructed based on points in $P$.\\
(E) {\em Small-sized structures.} For every box in $\cP_2$, based on its conflict list
 build the structure of   Lemma~\ref{lemma:topksmalldomin}.

The properties of the shallow cuttings and the fact that Lemma~\ref{lemma:topkdominslow} and
Lemma~\ref{lemma:topksmalldomin} are linear-space structures ensures that the space occupied by our
data structure is $O(n)$ words.

\subparagraph{Query algorithm.} We divide the query algorithm into three cases:\\
(A) $k\geq \log n$: Then we query the slow structure.\\
(B) $k\in [\log^{1/3}n,\log n]$: Then we perform the FIND-ANY operation on the $\log n$-shallow cutting $\cP_1$ to find a box $B$ whose projection contains $q$, and then query the slow structure built on $CL_B$.\\
(C) $k\in [1,\log^{1/3}n]$: Then we perform the FIND-ANY operation on the $\log^{1/3} n$-shallow cutting $\cP_2$ to find a box $B$ whose projection contains $q$, perform a rank-space reduction of the query w.r.t. to $CL_B$, and then query the small-sized structure built on $CL_B$.

We need $O(\log\log_w n)$ time to answer the FIND-ANY operation~\cite{c13}.
 All the other steps take  $O(k)$ time. Thus the total query time is $O(\log\log_w n +k)$.

\subparagraph{Proof of correctness.} Assume that  $k\in [\log^{1/3}n,\log n]$.
Given a query point $q(q_x,q_y)$, let $q'=(q_x,q_y,q_z)$ be a point such that there are
exactly $k$ points in $P \cap q'$. Then it is guaranteed that there exists a box  in the $\log n$-shallow cutting which will contain $q'$. The implementation of the FIND-ANY operation ensures that the box $B$ returned by it will contain $q'$. Therefore, the top-$k$ points of $P \cap q$ will be present in the
conflict list of $B$. A similar argument holds for $k<\log^{1/3}n$.

\subparagraph{Remark.} In the query algorithm, we assumed that $|S\cap q| \geq k$.
This is easy to check by reporting the points in $S\cap q$ till one of the
following happens: either $k+1$
points are reported or all the points are reported.

\subsection{Top-$k$ 2-d rectangle stabbing}
Now we are ready to prove the following result.
\begin{theorem}
\label{theor:topkstab}
There is a linear-space data structure  that  can answer  any top-$k$ 2-d rectangle stabbing query in $O(\log_wn+ k)$ time. This is optimal in the word RAM model.
\end{theorem}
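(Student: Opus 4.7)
The plan is to adapt the 5-sided rectangle stabbing structure of Theorem~\ref{thm:rs-5-sided} to top-$k$ queries, using the reduction from the introduction: a weighted 2-d rectangle $r$ of weight $w(r)$ maps to the 5-sided 3-d box $r\times(-\infty,w(r)]$, and top-$k$ stabbing at the 2-d point $q$ becomes the task of reporting the $k$ stabbing 5-sided boxes with largest $z$-coordinate. First, I would modify every component of the 5-sided data structure so that it can be extracted in decreasing order of weight. The grid lists $Top(c)$ are already stored sorted by $z$-span, which is the weight under this mapping. The 3-d dominance structure built on the 3-sided pieces is replaced by the linear-space top-$k$ 2-d dominance structure of Theorem~\ref{theor:topkdomin}: a 3-sided piece $(-\infty,a]\times(-\infty,b]\times(-\infty,c]$ encodes a weighted 2-d point $(a,b)$ of weight $c$, and its stabbing coincides with dominance, so the structure returns precisely what we need in sorted order. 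The slow and leaf structures are analogously replaced by top-$k$ variants (using Lemma~\ref{lemma:topkdominslow} internally), which also return sorted output.

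Second, the query algorithm mirrors that of Theorem~\ref{thm:rs-5-sided}: it traverses the same recursion, visits $O(\log_w n)$ nodes, and at each node opens an incremental sorted stream on each of its $O(1)$ sub-structures. For the top-$k$ dominance structure a stream is realised by running case (C) of Theorem~\ref{theor:topkdomin} first and escalating to cases (B), (A) as needed, at a cost of $O(\log\log_w n_v)$ setup and $O(1)$ per subsequent extraction. A global priority queue keeps the current head of each stream; we repeatedly extract the maximum, emit it, and push the next element of the chosen stream, coordinating grid and slow streams to avoid duplicates exactly as in Theorem~\ref{thm:rs-5-sided}, and stopping after $k$ extractions.

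The main obstacle is keeping the merge within $O(\log_w n+k)$ time. The setup costs telescope to $O(\log_w n)$ by the same recurrence $Q(n)=2Q(\sqrt{n\log^4 n})+O(\log\log_w n)$ that drives Theorem~\ref{thm:rs-5-sided}; but a binary heap on the $O(\log_w n)$ streams would add an $O(\log\log n)$ factor per operation. To avoid this I would employ a Fredman--Willard atomic heap (or equivalent q-heap), which supports $O(1)$ insert and extract-max on heaps of polylogarithmic size, comfortably absorbing our heap of $O(\log_w n)$ elements. Linear space is inherited from the bit-level recurrence $s(n)=s(\sqrt{n\log^4 n})+O(\log n)$ in Section~\ref{sec:5-sided}, since every replacement sub-structure is linear-space. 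Optimality in the word RAM follows from \Patrascu's $\Omega(\log_w n)$ lower bound for 2-d rectangle stabbing.
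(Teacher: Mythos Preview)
Your proposal is correct and matches the paper's approach almost exactly: the paper likewise replaces the 3-d dominance, slow, and leaf components of the 5-sided structure by their top-$k$ analogues (Theorem~\ref{theor:topkdomin}, Lemma~\ref{lemma:topkstabslow}, and a small-universe top-$k$ structure), treats each of the $O(\log_w n)$ visited nodes as a sorted stream, and merges them with a fusion tree on $O(\log_w n)$ elements---your atomic heap plays the identical role. The only cosmetic difference is that the paper explicitly dispatches the case $k\ge\log^3 n$ to the slow structure up front.
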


\subparagraph{Preliminaries.} As in the case of top-$k$ 2-d dominance query, we will need a
slow structure and a small-sized structure.

\begin{lemma}
\label{lemma:topkstabslow}
  There is a linear-space data structure which can answer any  top-$k$
  2-d rectangle stabbing query in $O(\log^2 n\cdot\log\log_w n +k)$ time.
\end{lemma}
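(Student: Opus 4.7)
The plan is to adapt Rahul's construction behind Lemma~\ref{lemma:slow} for 3-d $5$-sided rectangle stabbing to the 2-d top-$k$ setting, substituting the innermost dominance-reporting primitive with the top-$k$ 2-d dominance structure of Theorem~\ref{theor:topkdomin}. This substitution is what promotes the additive log factor from $\log\log n$ down to $\log\log_w n$, since Theorem~\ref{theor:topkdomin} is the word-RAM-optimal refinement of the dominance-reporting building block implicit in Rahul's approach.

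First I would lift each rectangle $r=[x_1,x_2]\times[y_1,y_2]$ of weight $w(r)$ to a $5$-sided 3-d box $r\times(-\infty,w(r)]$, so that top-$k$ 2-d stabbing by $(q_x,q_y)$ becomes: among the $5$-sided boxes stabbed by $(q_x,q_y,-\infty)$, output the $k$ with largest $z$-upper-bound. I then build two nested interval trees: a primary tree $T$ on the $x$-projections and, at each node $v$ of $T$, a secondary tree $T_v$ on the $y$-projections of the rectangles stored at $v$. By the interval-tree property each rectangle sits at $O(1)$ primary and $O(1)$ secondary nodes. At each secondary node $u$ nested within a primary node $v$, the stored rectangles cross both $x_v$ and $y_u$, so once the signs of $q_x-x_v$ and $q_y-y_u$ are fixed the stabbing condition reduces to a 2-d dominance inequality on two of the rectangle's endpoint coordinates. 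For each of the four sign combinations I would attach a top-$k$ 2-d dominance structure via Theorem~\ref{theor:topkdomin}, using the weight as the selection key; since that theorem is linear-space, the total space sums to $O(n)$ words.

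A query descends both trees along their unique root-to-leaf paths, touching $O(\log^2 n)$ secondary dominance instances. At each instance I would invoke the top-$k$ 2-d dominance query lazily, feeding the $O(\log^2 n)$ resulting weight-sorted streams into a min-heap in order to extract the $k$ globally heaviest stabbed rectangles; the startup cost is $O(\log^2 n\cdot\log\log_w n)$ and the remaining work is proportional to the $k$ output elements. The main obstacle is ensuring that heap extractions, which would naively add a $\log\log n$ factor per output, do not spoil the claimed $+k$ term. I expect to address this via a two-regime split: for $k\le\log^2 n$, the extra merging cost $O(k\log\log n)$ is absorbed into the $O(\log^2 n\cdot\log\log_w n)$ setup; for $k>\log^2 n$, I would collect the $O(k)$ output elements into an unsorted pool and use linear-time selection to extract the global top-$k$, again avoiding any $\log\log n$ per-output overhead. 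Either branch yields the claimed query time $O(\log^2 n\cdot\log\log_w n+k)$ with $O(n)$ words of total space.
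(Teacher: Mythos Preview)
Your overall decomposition is exactly the paper's: two nested interval trees reduce 2-d top-$k$ rectangle stabbing to $O(\log^2 n)$ top-$k$ 2-d dominance subproblems, each handled by Theorem~\ref{theor:topkdomin}. The paper simply cites Rahul~\cite{r15} for this reduction and then says the $k$ heaviest outputs are extracted ``using the same technique that will be used later in this section,'' namely the fusion-tree priority queue.

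The gap is in your merging analysis. In the first regime ($k\le\log^2 n$) you assert that the heap cost $O(k\log\log n)$ is absorbed into $O(\log^2 n\cdot\log\log_w n)$. This fails whenever $w=n^{\Omega(1)}$: then $\log\log_w n=O(1)$, so the budget is only $O(\log^2 n)$, whereas $k\log\log n$ can be as large as $\log^2 n\cdot\log\log n$. In the second regime, ``collect the $O(k)$ output elements into an unsorted pool'' is underspecified (how many do you pull from each of the $\log^2 n$ streams without already having done the merge?), and linear-time selection sacrifices the sorted, online output that the paper explicitly requires of this lemma for its later use.

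The clean fix---and what the paper actually does---is to maintain the $O(\log^2 n)$ stream heads in a fusion-tree priority queue. Since $w\ge\log n$ implies $\log^2 n\le w^2$, each insert/delete-max costs $O(\log_w(\log^2 n))=O(1)$, so the $k$ extractions cost $O(k)$ total and the output comes in sorted order with no case split needed.
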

\begin{proof}
  It is known that we can answer  a 3-d 5-sided  rectangle stabbing  query by querying
  $O(\log^2n)$ 3-d dominance reporting structures~\cite{r15}. The space of the data structure
  is within the same  bounds as the 3-d dominance reporting structure.  Refer to \cite{r15}
  for further details.

In the same way, we can answer a 2-d top-$k$  rectangle stabbing query by querying $\log^2n$  top-$k$ 2-d dominance queries. We will return the $k$ heaviest points in {\em sorted} order using the same technique that will be used later in this section.
\end{proof}

\begin{lemma}
  \label{lemma:topk-stab-small}
We can keep $m=w^{1/4}$ points in a data structure of size $O(m)$-words   that
 answers the top-$k$ 2-d rectangle stabbing query
 in $O(1+k)$ time.
\end{lemma}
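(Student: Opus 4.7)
The plan is to mimic the strategy of Lemma~\ref{lemma:topksmalldomin}: reduce to rank space, pre-compute the top-$k$ answer for every combinatorially distinct query location, and pack all pre-computed data into $O(1)$ machine words. (We interpret the ``points'' in the statement as rectangles, matching the surrounding context of 2-d rectangle stabbing.)

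First, I would apply a rank-space reduction on the $m=w^{1/4}$ rectangles, mapping each coordinate to its rank in $[2m]$. Storing the inverse coordinate maps, the weights, and the original rectangles costs $O(m)$ words. After reduction there are at most $O(m^2)=O(w^{1/2})$ combinatorially distinct query locations, one per cell of the $(2m)\times(2m)$ grid. For each such cell $c$ I would pre-compute the list $L_c$ of rectangles stabbed by $c$, sorted in decreasing order of weight. Each list has length at most $m$, and each entry is a rank-space index of only $O(\log m)=O(\log w)$ bits. The aggregate size, including a small length field per cell, is $O(m^2\cdot m\log m)=O(w^{3/4}\log w)$ bits, which is $o(w)$ and therefore fits in a single machine word. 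Together with the $O(m)$-word inverse tables, the total space is $O(m)$ words as required.

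To answer a query $(q_x,q_y,k)$ I would proceed in two stages. In the lookup stage I identify the grid cell $c$ containing $(q_x,q_y)$: since the $2m$ boundary coordinates per axis are $O(\log w)$-bit integers, I can pack them all into a single word and perform both predecessor searches in $O(1)$ time using standard word-parallel comparison (broadcast $q_x$ via multiplication, subtract, and extract the highest matching position via a constant number of shifts and masks). In the output stage I use a pre-stored offset table (again of $o(w)$ bits) to jump directly to the packed representation of $L_c$ and extract its first $\min(k,|L_c|)$ entries one at a time by shift-and-mask; each extraction is $O(1)$, so the stage costs $O(1+k)$. The extracted rank-space indices are then mapped back through the $O(m)$-word inverse tables to recover the original rectangles and weights.

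The main obstacle, as in Lemma~\ref{lemma:topksmalldomin}, is not the combinatorics but the bit-packing bookkeeping: one must verify that both the cell lookup and the per-entry extraction really take $O(1)$ time using only standard word-RAM primitives, with no stray $\log w$ factor creeping in during predecessor search or decoding. This follows from the fact that the $O(w^{1/4})$ axis values and the $O(w^{1/2})$ cell-offset pointers each fit in a single word, so the required operations reduce to constant-width SIMD-style comparisons, multiplications, and table lookups. Once this is checked, the overall query cost is $O(1+k)$ and the total space is $O(m)$ words, establishing the lemma.
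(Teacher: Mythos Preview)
The paper does not give an explicit proof of this lemma; it is meant to follow the same tabulation idea as Lemma~\ref{lemma:leaf} and Lemma~\ref{lemma:topksmalldomin}, and your proposal does exactly that. The overall plan---rank-space reduction, precompute the sorted answer list for each of the $O(m^2)$ grid cells, pack everything into $O(1)$ words---is correct, and your observation that the \emph{full} list per cell already fits in $O(m^3\log m)=o(w)$ bits (so no slow fallback structure is needed, and online reporting is automatic) is a clean simplification over the analogous Lemma~\ref{lemma:topksmalldomin}.

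One step does not work as written. In the lookup stage you assert that the $2m$ boundary coordinates per axis are $O(\log w)$-bit and can therefore be packed into one word for a broadcast/subtract predecessor search against $q_x$. The boundary coordinates are $O(\log w)$-bit only \emph{after} rank reduction; the query $(q_x,q_y)$ arrives in the original universe, so what you actually need is a predecessor search of a $w$-bit key among $O(w^{1/4})$ $w$-bit keys, and the packed-word parallel-comparison trick you describe does not apply to full-width comparands. The fix is standard---use a fusion-tree node (as the paper already does elsewhere) to perform this predecessor search in $O(1)$ time and $O(m)$ words---so this is a local patch rather than a structural gap.
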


All the above data structures (Theorem~\ref{theor:topkdomin}, Lemma~\ref{lemma:topkstabslow}, and
Lemma~\ref{lemma:topk-stab-small}) support queries in  an online manner. That is, we do not need to know the value of $k$ when the query is asked, since the elements are reported  in descending order of their weights. The procedure can be paused and resumed at a later time.

\subparagraph{Data structure.} We only worry about $k<\log^3n$, since the other case can be
optimally handled by Lemma~\ref{lemma:topkstabslow}.
We use the same structure as in the proof of Theorem~\ref{thm:rs-5-sided} with the following differences: we use  top-$k$ 2-d dominance  structure of Theorem~\ref{theor:topkdomin} instead of the 3-d dominance structure,
 and we use the slow data structure of Lemma~\ref{lemma:topkstabslow} instead of the slow structure of
Lemma~\ref{lemma:slow}. Now $Top(c)$ is the $\log^3|S_v|$ heaviest rectangles among the rectangles covering the grid cell $c$ and is stored in decreasing order of their weight. Finally, the leaf structure is built by using Lemma~\ref{lemma:topk-stab-small}.

\subparagraph{Query algorithm.} Consider the following abstract problem:
We are given sorted lists $L_1,L_2,\ldots, L_t$ such that the total number of
elements in all the lists is less than or equal to $n$, and $t=O(\log_wn)$.
The goal is to report the $k$ heaviest elements. To answer this, we build a
heap $\cG$ based on the heaviest element from each list. Now we perform
the following operations on $\cG$ $k$ times: (a) delete the element, $e$, with the largest
weight and report it, and (b) if $e$ came from list $L_i$, then
insert the next heaviest element from $L_i$ into $\cG$.
 We implement $\cG$ as a fusion tree~\cite{fw93}; since $\cG$ contains at most $\log_w n$ elements, all operations on $\cG$ are supported in $O(1)$ time.

By now the reader must have guessed the query algorithm.\\
(A) {\em Identify} the $O(\log_w n)$ nodes as explained in Section~\ref{sec:5-sided}. \\
(B) Each  identified node acts a list $L_i$ (as discussed before, the grid structure and
the top-$k$ 2-d dominance structure can report in an online manner. If
more than $\log^3|S_v|$ rectangles have been reported from a node $v$, then
we switch to its slow structure).\\
(C) Now find the top-$k$ heaviest rectangles in $S\cap q$.

\section{3-d 6-sided rectangle stabbing}
\label{sec:6sid}
In this section we will fill the missing details for the proof 
of the following result. 

\begin{theorem}
\label{theor:stab6sid}
There is a linear-space data structure that answers 3-d rectangle stabbing queries in $O(\log^2_w n+k)$ time.
\end{theorem}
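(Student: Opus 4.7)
The plan is to reduce 3-d 6-sided rectangle stabbing to 3-d 5-sided stabbing (handled optimally by Theorem~\ref{thm:rs-5-sided}) plus a $z$-restricted sub-problem, via a high-fan-out interval tree on the $z$-coordinates. First I would build an interval tree $T_z$ on the $z$-coordinates with fan-out $f=w^{\epsilon}$ for a small constant $\epsilon>0$, so that $T_z$ has depth $O(\log_w n)$, and store each input box $R=[x_1,x_2]\times[y_1,y_2]\times[z_1,z_2]$ at the unique node $v\in T_z$ whose $z$-range contains $[z_1,z_2]$ but at which the $z$-interval straddles at least one child-sub-slab boundary. At that $v$, letting $i_1,i_2\in[f]$ be the sub-slab indices containing $z_1$ and $z_2$, I decompose $R$ into (i) a \emph{lower part} in sub-slab $i_1$ (5-sided: the condition $q_z\ge z_1$ plus the 4-sided condition in $xy$, since $q_z\le \text{top}(i_1)$ is automatic), (ii) a symmetric \emph{upper part} in sub-slab $i_2$ (also 5-sided), and (iii) a \emph{middle part} $[x_1,x_2]\times[y_1,y_2]$ tagged with the colour range $[i_1+1,i_2-1]\subseteq[f]$, which becomes a 3-d 6-sided box once the colour is promoted to a third coordinate in $[w^{\epsilon}]$.

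At each node $v$ I build three structures on the rectangles $S_v$ stored there: two linear-space 3-d 5-sided stabbing structures (via Theorem~\ref{thm:rs-5-sided}), one for upper parts and one for lower parts, each answering in $O(\log_w|S_v|+k)$ time; and a \emph{$z$-restricted 3-d 6-sided stabbing structure} on the middle parts, whose third coordinate lies in $[w^{\epsilon}]$. To answer a query $(q_x,q_y,q_z)$, I locate the root-to-leaf path in $T_z$ determined by $q_z$ and query all three structures at each node on the path, passing the sub-slab index of $q_z$ at $v$ as the colour to the middle-parts structure; at the leaf I invoke Lemma~\ref{lemma:leaf}. The main obstacle is the $z$-restricted 6-sided sub-structure: Lemma~\ref{lemma:restric4sid2-main} delivers the analogous bound only for the 2-sided-in-$xy$ (dominance) case via 2-d shallow cuttings grouped across the $O(w^{2\epsilon})$ colour pairs, and I would extend it to the 4-sided-in-$xy$ case by wrapping a second, inner interval tree with fan-out $w^{\epsilon}$ on the $x$-coordinates around the colour-grouping construction, so that each inner node reduces to dominance-in-$xy$ pieces (handled directly by Lemma~\ref{lemma:restric4sid2-main}) plus residual pieces carrying an additional small $x$-colour that can be pushed down or absorbed into the leaf structure of Lemma~\ref{lemma:leaf}.

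Since each input box is stored at a single LCA, the three structures at $v$ occupy $O(|S_v|)$ words and the grand total is $O(n)$ words; the query visits $O(\log_w n)$ nodes, each costing $O(\log_w n+k_v)$ (dominated by the 5-sided queries, with the middle-parts query fitting within the same bound), which telescopes to $O(\log_w^2 n+k)$ and yields Theorem~\ref{theor:stab6sid}.
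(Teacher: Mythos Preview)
Your outer skeleton matches the paper exactly: a fan-out-$w^\epsilon$ interval tree on $z$, with the end-slab pieces handled as 3-d 5-sided stabbing via Theorem~\ref{thm:rs-5-sided} and the middle pieces forming a $z$-restricted 6-sided sub-problem (third coordinate in $[w^\epsilon]$). Both you and the paper agree that the theorem follows once that sub-problem is solved in $O(\log_w m + k)$ time with linear space.

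The gap is in your treatment of the $z$-restricted 6-sided sub-problem. An inner interval tree on $x$ with fan-out $w^\epsilon$ reduces 4-sided-in-$xy$ only to \emph{3-sided}-in-$xy$: the left/right pieces at each inner node are of the form $(-\infty,x_2]\times[y_1,y_2]\times[i,j]$, still two-sided in $y$, so Lemma~\ref{lemma:restric4sid2-main} (which handles only $(-\infty,x]\times(-\infty,y]\times[i,j]$) does not apply. Getting down to dominance would require a second nested interval tree on $y$, but two trees of depth $O(\log_w m)$ each already yield $\Theta(\log_w^2 m)$ node visits, which combined with the $O(\log_w n)$ outer levels overshoots the target. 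Your ``residual pieces carrying an additional small $x$-colour'' are rectangles $[i_x,j_x]\times[y_1,y_2]\times[i_z,j_z]$ with one full-universe dimension and up to $m$ of them per inner node; they cannot be ``absorbed into the leaf structure of Lemma~\ref{lemma:leaf}'', which only handles $O(w^{1/4})$ objects.

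The paper does not use an inner interval tree here at all. It instead reruns the entire Alstrup--Brodal--Rauhe $\sqrt{m}$-grid recursion from the proof of Theorem~\ref{thm:rs-5-sided}, swapping out the three encoding structures: the slow structure becomes a binary interval tree on $z$ layered over 5-sided stabbing (Lemma~\ref{lemma:restrslow}), the per-cell list $\mathit{Top}(c)$ becomes a per-cell per-$z$ list $\mathit{Cover}(c,z)$ for each $z\in[w^\epsilon]$, and the row/column 3-d dominance structure becomes the $z$-restricted 4-sided structure of Lemma~\ref{lemma:restric4sid2}. This yields the same recurrence $Q(m)=2Q(\sqrt{m\log^4 m})+O(\log\log_w m)$ as in the 5-sided case, hence $Q(m)=O(\log_w m)$. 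The idea your proposal is missing is that the $z$-restricted 6-sided problem needs the full grid-recursion machinery, not just an interval tree wrapped around Lemma~\ref{lemma:restric4sid2-main}.
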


\subsection{Skeleton structure}

\subparagraph{Data structure.} We construct an interval tree, $IT$, with fan-out $w^{\eps}$ on the $z$-projections of rectangles in $S$ 
for a positive constant $\eps<1/18$. For a node $v \in IT$, let $\ell(v_i)$ denote the bounding planes 
of its children and let $z_i(v)$ be the $z$-coordinate of $\ell(v_i)$. 
The set $S(v) \subseteq S$ contains all rectangles $[x_1,x_2]\times [y_1,y_2]\times [z_1,z_2]$, 
such that $v$ is the lowest common ancestor of the leaves storing $z_1$ and $z_2$. We keep three stabbing data structures $M(v)$, $L(v)$, and $R(v)$ at each node $v$. Consider an arbitrary rectangle 
$r=[x_1,x_2]\times [y_1,y_2]\times [z_1,z_2]$ stored in $S(v)$. 
Suppose that $z_k(v)< z_1\le z_{k+1}(v)$ and $z_{l-1}(v)\le z_2< z_{l}(v)$. 
If $k+1<l-1$, we store a rectangle $[x_1,x_2]\times [y_1,y_2]\times [k+1,l-1]$ 
in a data structure $M(v)$. We also store $r$ in data structures $R(v_k)$ and $L(v_l)$. 
The $z$-coordinates of all rectangles in $M(v)$ lie in the integer universe $[w^{\eps}]$. 
We will  use this fact to answer rectangle stabbing queries in $O(\log_w n+k)$ time, as will be shown later in this section in Theorem~\ref{thm:z-restricted}. Rectangles stored in $L(v)$ and $R(v)$ cross the left or the right bounding plane of the node $v$. Hence we can treat the rectangles in $L(v)$ (resp.\ $R(v)$) as 5-sided rectangles. Using Theorem~\ref{thm:rs-5-sided}, we can answer stabbing queries on $R(v)$ and $L(v)$ in $O(\log_w n +k )$ time.

\subparagraph{Query algorithm.} To report all rectangles that stab a point $q$, we traverse a root-to-leaf path 
in $IT$ and answer stabbing queries using data structures built for $L(v)$, $R(v)$ and $M(v)$ in each node $v$. Since the length of a root-to-leaf path is  $O(\log_w n)$, the query is answered in $O(\log^2_w n+k)$ time

\subsection{$z$-restricted queries}  

It remains to show how to answer 3-d rectangle stabbing queries when the $z$-coordinates of the endpoints 
are bounded by the integer universe $[w^{\eps}]$.  This scenario will be called {\em $z$-restricted queries}.
\begin{lemma}
\label{lemma:rest4sid1}
  There exists a linear-space data structure that answers $z$-restricted $4$-sided rectangle 
  stabbing queries in $O(\log w\cdot \log\log n+k)$ time.
\end{lemma}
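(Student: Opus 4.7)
The plan is to reduce the $z$-restricted $4$-sided stabbing problem to $O(\log w)$ instances of $3$-d dominance reporting via a balanced binary interval tree on the $z$-axis. The key observation is that all $z$-coordinates lie in $[w^{\eps}]$, so such an interval tree has depth only $O(\log w^{\eps})=O(\log w)$; this is precisely where the factor $\log w$ in the target bound originates, replacing the usual $\log n$ from a standard interval tree.

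Concretely, I would build a balanced binary tree $T$ on the $w^{\eps}$ possible $z$-values. Each rectangle $r=(-\infty,x_f]\times(-\infty,y_f]\times[i,j]$ is stored at the node $v^*(r)$ that is the least common ancestor in $T$ of the leaves for $i$ and $j$. At every internal node $v$ with split value $m(v)$, I build two linear-space $3$-d dominance reporting structures (using the known $O(\log\log n+k)$-time structure of Afshani--Chan): $D_L(v)$ over points $(x_f,y_f,-i)$ for the rectangles stored at $v$, and $D_R(v)$ over points $(x_f,y_f,j)$ for the same rectangles. Since each rectangle is inserted into exactly two such structures, the total space telescopes to $O(n)$.

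To answer a query at $q=(q_x,q_y,q_z)$, I traverse the root-to-leaf path in $T$ leading to $q_z$. At each internal node $v$ on the path, if $q_z\le m(v)$ I query $D_L(v)$ for all points dominating $(q_x,q_y,-q_z)$; otherwise I query $D_R(v)$ for all points dominating $(q_x,q_y,q_z)$. Correctness follows because a rectangle $r$ is stabbed by $q$ iff $q_x\le x_f$, $q_y\le y_f$, and $q_z\in[i,j]$; its allocation node $v^*(r)$ lies on the query path, and the defining inequality $i\le m(v^*(r))<j$ makes exactly one of the two $z$-side inequalities automatic, while the other becomes the third coordinate in the corresponding dominance query. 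Summing over the path, the query cost is $O(\log w \cdot \log\log n + k)$.

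The only real subtlety is verifying that the allocation scheme together with the two-sided structures $D_L,D_R$ correctly handles every stabbing case without double-reporting, which follows from the fact that $v^*(r)$ is unique and only one of $D_L(v^*(r))$ or $D_R(v^*(r))$ is consulted for a given query. Apart from this bookkeeping, the result is immediate from plugging the black-box $3$-d dominance reporting structure into the shallow interval tree enabled by the $z$-restriction.
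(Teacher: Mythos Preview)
Your proposal is correct and takes essentially the same approach as the paper: a binary interval tree on the $z$-coordinates (of depth $O(\log w)$ thanks to the $[w^{\eps}]$ universe), with two 3-d dominance reporting structures per node, queried along a root-to-leaf path. You have in fact spelled out the dominance encoding and the no-double-reporting argument in more detail than the paper does.
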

\begin{proof}
  We construct a (binary) interval tree $\cT_z$ on the $z$-projections of the rectangles. Every leaf of this tree corresponds to a $z$-coordinate in the universe $[w^{\eps}]$. Since $\cT_z$ has $w^{\eps}$ leaves, 
  its height is $\log w^{\eps}=O(\log w)$. Each rectangle is stored at a particular node in the tree and 
  for each node we build two data structures that support 3-d dominance reporting queries. See e.g., Rahul~\cite{r15} for a detailed description about the construction of the data structure. A stabbing query can be answered by traversing a path from the root of $\cT$ to a leaf node. 
 Since a 3-d dominance reporting query can be answered in $O(\log\log n + k)$ time and we visit $O(\log w)$ nodes, the total time needed to answer a query is $O(\log w\cdot \log\log n+k)$. 
\end{proof}

\begin{lemma}
\label{lemma:restric4sid2}
  There exists a linear-space data structure that answers $z$-restricted 4-sided 3-d rectangle stabbing
   queries in $O(\log\log_w n+k)$ time.
\end{lemma}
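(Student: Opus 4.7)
The plan is to build on Lemma~\ref{lemma:rest4sid1-main} (the ``slow'' structure with $O(\log w\cdot\log\log n+k)$ query time) and use it only when the output size $k$ is large; for small $k$ we design a faster structure based on shallow cuttings. Since the paper allows us to assume $k<\log w\cdot\log\log n$ (large-$k$ queries being dominated by the output) and $n>w^{2\eps}\cdot\log w\log\log n$, the target becomes an $O(\log\log_w n + k)$ time bound in the small-$k$ regime.

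First, I would classify rectangles by their $z$-projection: for each pair $(i,j)$ with $1\le i\le j\le w^{\eps}$, let $S_{ij}$ be the rectangles of the form $(-\infty,x_f]\times(-\infty,y_f]\times [i,j]$, and associate to each such rectangle the 2-d point $p(r)=(x_f,y_f)$. A query point $q=(q_x,q_y,q_z)$ stabs $r\in S_{ij}$ precisely when $i\le q_z\le j$ and $p(r)$ dominates $(q_x,q_y)$. For each of the $O(w^{2\eps})$ classes, I would build a $t$-shallow cutting $L_{ij}$ with $t=\log w\cdot\log\log n$: $O(|S_{ij}|/t)$ cells of the form $[a,\infty)\times[b,\infty)$ such that every point dominated by at most $t$ points of $p(S_{ij})$ lies in some cell and each cell contains $O(t)$ points. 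The naive idea would be to query all $O(w^{2\eps})$ cuttings, but that is far too slow; this is the main obstacle.

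To overcome it, I would use a grouping trick across classes. Let $\mathcal{C}=\bigcup_{i,j}\mathcal{C}_{ij}$ be the union of all cutting corners, sorted by $x$-coordinate, and partition $\mathcal{C}$ into consecutive groups $G_{\alpha}$ of exactly $w^{2\eps}$ corners each. For each group $G_{\alpha}$, augment it to $\overline{G}_{\alpha}$ by adding, for every class $(i,j)$, the rightmost corner of $\mathcal{C}_{ij}$ lying to the left of $G_{\alpha}$; this contributes at most $O(w^{2\eps})$ extra corners. Then let $R_{\alpha}$ be the set of rectangles $r$ such that $p(r)$ dominates some corner in $\overline{G}_{\alpha}$. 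Since $|\overline{G}_{\alpha}|=O(w^{2\eps})$ and each corner's conflict list has size $O(t)$, we get $|R_{\alpha}|=O(w^{2\eps}t)=O(w^{1/6})\le O(w^{1/4})$, so we can preprocess each $R_{\alpha}$ by a rank-space reduction and the leaf structure of Lemma~\ref{lemma:leaf} in $O(|R_{\alpha}|)$ words. For the space bound, $|\mathcal{C}|=O(n/t)$ summed across classes (using $n/t>w^{2\eps}$ to absorb the $+1$ terms), so the number of groups is $O(n/(tw^{2\eps}))$, and the total space is $\sum_{\alpha}|R_{\alpha}|=O(n)$.

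To answer a query $q$ with $k\le t$ output rectangles, I would first locate in $O(\log\log_w n)$ time the group $G_\alpha$ whose $x$-range contains $q_x$ using predecessor search~\cite{pt06}, and then call the leaf structure on $R_\alpha$ in $O(1+k)$ time. The correctness claim I need to establish is that every rectangle $r$ stabbed by $q$ is in $R_\alpha$: since $q$ stabs $r\in S_{ij}$, the point $p(r)$ dominates $(q_x,q_y)$ and $q_z\in[i,j]$, and there are at most $t$ such $r$ within class $S_{ij}$ dominating $(q_x,q_y)$ across all classes; the shallow cutting property then guarantees $p(r)\in\textit{Dom}(c)$ for some $c\in\mathcal{C}_{ij}$ with $c_x\le q_x$ and $c_y\le q_y$, and the rightmost such $c$ is either in $G_\alpha$ itself or is exactly the ``immediately left'' corner added into $\overline{G}_{\alpha}$. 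Finally, if the query returns more than $t$ rectangles, we abort and switch to the slow structure of Lemma~\ref{lemma:rest4sid1-main}, whose $O(\log w\cdot\log\log n+k)=O(k)$ cost is absorbed. Combined, this gives the desired $O(\log\log_w n + k)$ query time with linear space.
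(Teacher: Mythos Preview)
Your proposal is correct and follows essentially the same approach as the paper: the paper's proof of this lemma simply cites Lemma~\ref{lemma:restric4sid2-main} (the shallow-cutting plus corner-grouping construction you describe) for the case $k<\log w\cdot\log\log n$, and falls back to the slow structure of Lemma~\ref{lemma:rest4sid1-main} when $k$ is large. You have faithfully reconstructed the content of Lemma~\ref{lemma:restric4sid2-main}, including the $t$-shallow cuttings per $z$-class, the partition of $\mathcal{C}$ into $x$-consecutive groups of size $w^{2\eps}$, the augmentation by immediately-left corners, and the use of Lemma~\ref{lemma:leaf} on each $R_\alpha$.
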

\begin{proof}
Lemma~\ref{lemma:restric4sid2-main} (in the main body of the paper) proves this lemma for the 
case $k<\log w\cdot\log\log_wn$. To handle the case of $k\geq \log w\log\log_wn$, 
we will use Lemma~\ref{lemma:rest4sid1}.

\end{proof}

Now we turn our attention to $z$-restricted $6$-sided rectangle stabbing queries. 
In this case, the data structure contains $6$-sided rectangles, 
but again the $z$-coordinates of the endpoints lie in the integer universe $[w^{\eps}]$.
\begin{lemma}
  \label{lemma:restrslow}
  There exists a linear-space data structure that answers $z$-restricted six-sided rectangle stabbing queries in $O(\log n + k)$ time. 
\end{lemma}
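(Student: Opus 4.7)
The plan is to build a shallow interval tree on the restricted $z$-universe and, at each node, reduce the remaining query to a 5-sided rectangle stabbing problem that Theorem~\ref{thm:rs-5-sided} already solves in $O(\log_w n+k)$ time. Because the $z$-coordinates lie in $[w^{\eps}]$, a binary interval tree $\cT$ over $[w^{\eps}]$ has depth only $O(\log w)$, which is exactly the factor needed to convert the per-level $O(\log_w n)$ cost into the target $O(\log n)$.

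Concretely, at each internal node $v$ of $\cT$ with median $z$-value $m(v)$, I would assign a rectangle $r=[x_1,x_2]\times[y_1,y_2]\times[z_1,z_2]$ to $v$ when $v$ is the lowest common ancestor of the leaves storing $z_1$ and $z_2$ (equivalently, $v$ is the shallowest node whose median is crossed by $[z_1,z_2]$). Each rectangle is assigned to exactly one node. At $v$ I would keep two auxiliary 5-sided stabbing structures built via Theorem~\ref{thm:rs-5-sided}: $L(v)$ stores $[x_1,x_2]\times[y_1,y_2]\times[z_1,+\infty)$ for every $r$ at $v$, and $R(v)$ stores $[x_1,x_2]\times[y_1,y_2]\times(-\infty,z_2]$. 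Since each input rectangle contributes to exactly two 5-sided structures, and Theorem~\ref{thm:rs-5-sided} is linear-space, the total space is $O(n)$.

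To answer a query $q=(q_x,q_y,q_z)$, I would walk from the root of $\cT$ to the leaf containing $q_z$. At each visited internal node $v$, if $q_z\le m(v)$ then every rectangle $r$ assigned to $v$ automatically satisfies $q_z\le z_2$ (because such $r$ crosses $m(v)$), so the stabbing condition on $r$ collapses to the 5-sided query on $L(v)$; symmetrically, if $q_z>m(v)$, I query $R(v)$. Each rectangle stabbed by $q$ is reported exactly once, at its owning node. Summing over the $O(\log w)$ nodes on the path yields
\[\sum_{v\text{ on path}} O(\log_w n + k_v) \;=\; O(\log w\cdot\log_w n)+O(k) \;=\; O(\log n+k),\]
matching the claimed bound.

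The main obstacle I anticipate is the output-sensitivity bookkeeping: charging each reported rectangle to exactly one subquery is essential, otherwise an extra $\log w$ factor would contaminate the $k$ term and give $O(\log n+k\log w)$. The LCA-based assignment, combined with the fact that the query visits a single root-to-leaf path and consults exactly one of $\{L(v),R(v)\}$ per node, ensures that output cost is additive across the tree and hence $O(k)$. A minor secondary point is that the auxiliary structures at different nodes operate on different subuniverses; this is harmless because Theorem~\ref{thm:rs-5-sided} performs its own rank-space reduction internally, so I would invoke it as a black box on each $L(v)$ and $R(v)$.
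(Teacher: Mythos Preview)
Your proposal is correct and follows essentially the same approach as the paper: a binary interval tree over the $z$-universe $[w^{\eps}]$ of depth $O(\log w)$, with each rectangle assigned to one node and two 5-sided stabbing structures (from Theorem~\ref{thm:rs-5-sided}) per node, queried along a single root-to-leaf path. Your writeup is in fact more detailed than the paper's own proof, which sketches the same construction and defers the standard interval-tree mechanics to references.
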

\begin{proof}
  Again we use a binary interval tree on the $z$-projections of the rectangles, then assign 
  each rectangle to a particular node in the tree, and then build 
  two data structures at each node which will answer 5-sided rectangle stabbing queries. When we answer a $6$-sided query, we traverse a root-to-leaf path and answer a 5-sided query at every node. We refer the reader to~\cite{r15} or \cite{ptsnv14} for a complete description.  The height of the tree is $O(\log w)$ and, by Theorem~\ref{thm:rs-5-sided} we need $O(\log_w n)$ time to answer a 5-sided query. Hence the overall time it takes to answer a $z$-restricted six-sided query is $O(\log n)$. 
\end{proof}

\begin{theorem} \label{thm:z-restricted}
  There exists a linear-space data structure that answers a 
  $z$-restricted 6-sided rectangle stabbing query in $O(\log_w n+ k)$ time. 
\end{theorem}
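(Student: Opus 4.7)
My plan is to mirror the grid-based recursive construction of Theorem~\ref{thm:rs-5-sided}, performing the same $\sqrt{m/\log^4 m}$-way Alstrup--Brodal--Rauhe recursion on the $xy$-projection of the $z$-restricted 6-sided input, but with two ingredient swaps tailored to the $z$-restriction. First, the per-row and per-column 3-d dominance substructures are replaced by instances of Lemma~\ref{lemma:restric4sid2}, so that each such subquery still runs in $O(\log\log_w n + k)$ time. Second, the ``slow'' fallback at each node becomes Lemma~\ref{lemma:restrslow}, whose $O(\log n + k)$ cost is absorbed into $O(k)$ whenever it is triggered. The breaking rules (Stage-II and Stage-III) should transport verbatim with the $z$-interval $[z_1,z_2]\in[w^\eps]^2$ carried along on every piece; in particular, after one Stage-III step in a child slab, each row piece reduces within its row to the shape $(-\infty,x]\times\IR\times[z_1,z_2]$, which matches the $z$-restricted 4-sided shape of Lemma~\ref{lemma:restric4sid2} after padding the $y$-bound to $+\infty$.

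The only genuinely new component is the grid structure at each node. Because the 5-sided $\mathit{Top}(c)$ ordering by $z$-span no longer applies when the $z$-ranges are bounded on both sides, I would instead precompute, for each grid cell $c$ and each value $z_0\in[w^\eps]$, a list $\mathit{Top}(c,z_0)$ of up to $\log^3|S_v|$ stored grid rectangles that cover $c$ and whose $z$-range contains $z_0$. Answering a query at $(c,q_z)$ is then just a linear scan of $\mathit{Top}(c,q_z)$; if the scan saturates, we discard its output (certifying $k\geq\log^3|S_v|$) and switch to the slow fallback, whose $O(\log n)$ overhead is then dominated by the output size. Decoding of original coordinates reuses the pointer-to-leaf trick of Section~\ref{sec:5-sided}, keeping per-output decoding at $O(1)$. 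With these pieces in place, the query-time recurrence $Q(m)=2Q(\sqrt{m\log^4 m})+O(\log\log_w m)$ with base $Q(w^{1/4})=O(1)$ via Lemma~\ref{lemma:leaf} is preserved and still solves to $O(\log_w n+k)$, exactly as in Section~\ref{sec:5-sided}.

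The main obstacle I anticipate is the space analysis for the new grid structure: each stored rectangle now contributes to up to $w^\eps$ parallel $\mathit{Top}$ lists per covered cell, so the per-level bit budget inflates by a factor of $w^\eps$ over the 5-sided analysis. Because $\eps$ is a small constant and each list entry is an $O(\log|S_v|)$-bit pointer, I expect the extra $w^\eps$ factor to be absorbed against the machine word width $w$ in the same bit-packing spirit used in Section~\ref{sec:5-sided}; the key inequality $w^\eps\log\log n = O(w)$ holds comfortably since $w\geq\log n$ and $\eps<1/18$. The delicate bookkeeping that remains is to verify both that this absorption carries through the modified bit recurrence $s(m)=s(\sqrt{m\log^4 m})+O(\log m + w^\eps)$ to yield $O(n)$ total words, and that saturation of $\mathit{Top}(c,q_z)$ really does certify $k>\log^3|S_v|$ for the specific query value $q_z$ rather than only for some preselected $z_0$; both should follow by a routine adaptation of the 5-sided charging argument.
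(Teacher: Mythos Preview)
Your proposal is essentially the paper's own proof: it too reuses the recursion of Theorem~\ref{thm:rs-5-sided}, swaps in Lemma~\ref{lemma:restrslow} as the slow structure and Lemma~\ref{lemma:restric4sid2} for the row/column substructures, and replaces the grid structure by a per-cell, per-$z$ list $\mathit{Cover}(c,z)$ for each $z\in[w^\eps]$ (the paper happens to keep only $\log|S_v|$ rather than $\log^3|S_v|$ entries per list, giving a bit more slack, but either choice works). Your space worry resolves exactly as you anticipated---the extra $w^\eps$ factor is absorbed against the word size $w$---and the paper in fact waves this through with a single sentence, so your analysis is, if anything, more careful than the original.
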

\begin{proof}
  We use the approach as in Theorem~\ref{thm:rs-5-sided}, but use different encoding structures.\\
  (A) {\it Slow structure.}  An instance of 
Lemma~\ref{lemma:restrslow} is built on $S_v$. \\
(B) {\it Grid structure.}
 For every cell $c$ we keep lists $\mathit{Cover(c,z)}$ for $z=1$, $\ldots$, $w^{\eps}$; 
 $\mathit{Cover(c,z)}$ contains $\log |S_v|$ rectangles $r$, such that the projection of $r$ onto the $xy$-plane completely covers $c$ and the $z$-projection of $r$ is stabbed by $z$.  \\
(C) {\it ``$z$-restricted dominance'' structure.} For a given row or column in the grid, 
based on the $z$-restricted 4-sided rectangles stored in it, 
 an instance of Lemma~\ref{lemma:restric4sid2} is built.
 
 The space and the query time analysis follow from the analysis of Theorem~\ref{thm:rs-5-sided}.
\end{proof}



\end{document}